\newcommand{\ignore}[1]{}
\newcommand{\N}{\mathbb{N}}
\newtheorem{theorem}{Theorem}[section]
\newtheorem{lemma}[theorem]{Lemma}
\newtheorem{corollary}[theorem]{Corollary}
\newtheorem{claim}[theorem]{Claim}
\newenvironment{proof}[1][Proof]{\begin{trivlist}
\item[\hskip \labelsep {\bfseries #1}]}{\end{trivlist}}
\newenvironment{definition}[1][Definition]{\begin{trivlist}
\item[\hskip \labelsep {\bfseries #1}]}{\end{trivlist}}
\begin{document}
\title{Walrasian's Characterization and a Universal Ascending Auction}
\author{Oren Ben-Zwi\footnote{Research was done while the author was in University of Vienna and
was funded by the Vienna Science and Technology Fund (WWTF) through project ICT10-002.
Author's email: \tt{oren.benzwi@gmail.com}}
}

\maketitle

\begin{abstract}
We introduce a novel characterization of all Walrasian price vectors in terms of
forbidden over- and under demanded sets for monotone gross substitute combinatorial auctions.

For ascending and descending auctions we suggest a universal framework for finding
the minimum or maximum Walrasian price vectors for monotone gross substitute
combinatorial auctions. An ascending (descending) auction
is guaranteed to find the minimum (maximum) Walrasian if and only if it follows the
suggested framework.
\end{abstract}

\section{Introduction}

A \emph{combinatorial auction} is a market in which a set of heterogeneous indivisible items is offered
simultaneously to a group of bidders, who may have a different value for every subset of them.
Given a combinatorial auction, we wish to find a \emph{mechanism} that will allocate the items to the bidders
whilst optimizing, for instance, the \emph{social welfare}, the sum of the bidders' evaluations over their allocated bundles.

In a mechanism that we call \emph{iterative auction} the bidders are requested to respond to repeated queries by an
auctioneer until an allocation is determined. The auctioneer queries can come, for instance, in the form of
an item \emph{price vector}; where the bidders respond by a preferable set of items under these prices.
The bidders \emph{utility} for a set of items is the difference between the value of this set and its price.

If in a \emph{price query iterative auction}, the auctioneer is not allowed
to decrease prices, then the auction is called an \emph{ascending auction}.
An ascending auction is a natural and popular auction format in practice\footnote{The
English word ``auction'' is derived from the Latin word \emph{auge{\=o}}
(and its derivations \emph{augere}, \emph{auctum}) which means `increase', `augment'
or `enlarge'.}; whose theoretical properties have been widely studied, see for example~\cite{Cramton1998,CramtonShSt2006,BlumrosenNi2010,BlumrosenNi2005,BlumrosenNi2007,AnderssonAnTa2013,Ausubel2005,Ben-ZwiLaNe2013,KelsoCr1982,GulSt2000}.
In the same manner, a price query iterative auction that cannot increase the prices
is called a \emph{descending auction}\footnote{Ascending and descending auctions,
especially for single items, are also known as \emph{English} and \emph{Dutch} auctions, respectively.}.

Given a price vector, if in an allocation all bidders are allocated a set of maximum utility
the allocation is called an \emph{envy free} allocation and that vector of prices is an \emph{envy free}
price vector. If in addition all items are allocated then the allocation is denoted \emph{Walrasian allocation}
and the price vector is a \emph{Walrasian price vector}~\cite{Walras1874}.
Both envy free and Walrasian naturally `satisfy' the bidders, but the later also maximizes
the global social welfare.  Hence, a Walrasian equilibrium is a most desirable outcome of a combinatorial auction; still, it
is not bound to exist and even if it does, it is not clear how to find it\footnote{In contrast, envy free price vectors
always exist and easy to find, as we can take, for example, all prices to be so high that every bidder will only want an empty set.}.

Arguably, the most intuitive search dynamics for an equilibrium is
by increasing the prices of `excess demand' sets and decreasing the prices
of `excess supply' sets. This process was termed a \emph{t\^{a}tonnement auction}
by Walras~\cite{Walras1874}; however, the notions of excess demand and
excess supply are not easily definable in general, as each bidder may prefer
many different sets for a given price.

By restricting the valuation functions to certain classes, one can guarantee the existence
of Walrasian equilibrium. Lehmann~et~al. presented a hierarchy of bidder valuation classes,
in which the \emph{unit demand} valuation class is a proper subclasses of
the \emph{gross substitute} class~\cite{LehmannLeNi2006}.
\emph{Gross substitute} valuations, introduced by Kelso and Crawford, are valuations
that `support' ascending auctions. Loosely speaking, in a gross substitute valuation, a price
increase cannot decrease the `item demand' of items for which the price was not changed.
A Walrasian equilibrium is guaranteed to exist if all valuations are gross substitute~\cite{KelsoCr1982}.

We will restrict our attention in this work to auctions in which all bidder valuations are monotone gross substitute.
Gul and Stacchetti proved that these valuations are maximal to guarantee equilibrium, by
showing that for any non gross substitute valuation, there exists a set of unit demand valuations, such that together
all valuations cannot guarantee equilibrium~\cite{GulSt1999}. They also showed that all Walrasian
price vectors form a `lattice' and hence have unique \emph{minimum} and \emph{maximum} vectors.
In another work, Gul and Stacchetti defined \emph{over demanded sets} for monotone gross substitute valuations and showed that a price vector is
\emph{envy free} if and only if there is no over demanded set~\cite{GulSt2000}. Lastly, they introduced a unique \emph{excess demand set}
and showed that an `excess demand ascending auction', an auction that updates only prices of the excess demand set, will result in the minimum Walrasian.

We use Gul and Stacchetti's \emph{over demanded sets}, together with a dual \emph{under demanded sets} definition and show that a price vector is
\emph{Walrasian} if and only if there is no over demanded or under demanded set. We show further that a price vector is also \emph{minimum} (\emph{maximum})
Walrasian if and only if there is no \emph{weakly} under (over) demanded sets.

In other words, we generalize Gul and Stacchetti's price characterization from \emph{envy free} price
if and only if there is no \emph{over demanded} set, to \emph{Walrasian} price vector
if and only if there is no \emph{over demanded} nor \emph{under demanded} set.

We also introduce a generalize \emph{excess demand} set system (where Gul and Stacchetti's excess demand is an element in this system) and prove
that an ascending auction will find the minimum Walrasian price vector if and only if it will increase prices of excess demand sets. We show the same
for descending auctions.

\section{Preliminaries}
We introduce some of the definitions we use throughout the sequel. For a
complete view we add more definitions and known results on Appendix~\ref{app::more}.
\subsection{Combinatorial auctions (the model)}
In a \emph{combinatorial auction} setting we are given $m$ items, $\Omega$,
and $n$ bidders, where each bidder $i\in [n]$ is associated with a valuation functions\footnote{It
is natural to assume that valuations are non-negative rationals
and then, maybe after some scaling, one can assume even that valuations are natural numbers (including $0$). We will assume later
also that the prices are natural numbers, but this assumption is not needed given the first one. Our results will hold
also for real number valuations, but the proofs need to be more delicate.} $v_i:2^{\Omega}\rightarrow \N$.
The valuation functions are normalized such that
$\forall i,\ v_i(\emptyset) = 0$; and monotone with respect to inclusion. 

Bidder utilities are defined using an item \emph{price vector} $p = (p_1,p_2,...,p_m)\in \N^m$.
Given $p\in \N^m$, the \emph{utility} of bidder $i$ for a set of items $S\subseteq \Omega$
is defined to be $u_{i,p}(S) = v_i(S) - p(S)$, where $p(S) = \sum_{j\in S}{p_j}$.
A set $S\subseteq \Omega$ is \emph{demanded} by bidder $i$ at price $p$ if it maximizes the utility,
i.e., $u_{i,p}(S) \geq u_{i,p}(T), \forall T\subseteq \Omega$.
The set of all demanded sets for bidder $i$ at $p\in \N^m$ is denoted
$D_i(p) = \{D| u_{i,p}(D) \geq u_{i,p}(T), \forall T\subseteq \Omega\}$.

The bidders' incentive comes from the \emph{maximum utility} $u_i(p)$ function, which
is the utility of any demanded set $D$, i.e., $u_i(p) = u_{i,p}(D)$.
We omit the bidder's subscript and note $u_i(p)$ as $u(p)$ and $D_i(p)$ as $D(p)$ when
the bidder is given or plays no role. For singletons, we omit curly brackets and
use $j$ instead of $\{j\}$.

A \emph{solution} or an \emph{outcome} of a combinatorial auction is an allocation of
the items to the bidders.
The \emph{social welfare} of a solution is the sum of bidders valuations over
their allocated sets; that is, for an allocation $\{S_i\}_{i\in [n]}$ the
social welfare is $\sum_i{v_i(S_i)}$. $\ $

\subsection{Unit demand valuation}
A \emph{unit demand} valuation $v$ is defined by $m$ values $v(j),\ \forall j\in \Omega$;
with $v(S) = \max_{j\in S}{v(j)}$, for $S\subseteq \Omega$.

Assume all valuations are unit demand; we use the following two functions
for $S\subseteq \Omega$ and $p\in \N^m$: $\Lambda_p(S) = \{i | D_i(p) \subseteq S\}$ and
$\Xi_p(S) = \{i | D_i(p) \cap S \neq \emptyset\}$.
Mishra and Talman define $S\subseteq \Omega$ to be over demanded for $p\in \N^m$ if
$|\Lambda_p(S)| \geq |S|$ and under demanded if $|\Xi_p(S)| \leq |S|$\footnote{Note the resemblance
to Hall's marriage theorem~\cite{Hall1935}.}~\cite{MishraTa2010}.
Andersson~et.~al define $S\subseteq \Omega$ to be in excess demand for $p$ if
$\forall T\subseteq S, |\Lambda_p(S)\cap \Xi_p(T)| > |T|$~\cite{AnderssonAnTa2013}.

\subsection{Walrasian equilibrium and gross substitute valuation}
A $p\in \N^m$ and a \emph{partition}\footnote{Recall that a partition is an exact allocation, that is, an
allocation that allocates all items. In general, a Walrasian allocation does not have to be a partition, but for
monotone valuations it can be assumed so.} $\{S_i\subseteq \Omega\}_{i\in [n]}$ such that every bidder is allocated
a demand set in $p$ is called \emph{Walrasian equilibrium}.
Walrasian equilibrium does not have to exist but if it does it is known that it also
maximizes the social welfare (a variant of the \emph{first welfare theorem}, see for example~\cite{BlumrosenNi2007}).
Maximizing the social welfare can be addressed by the following integer linear program, $IP$:

\[\max
\sum_{i \in [n],\, S \subseteq \Omega}
x_{i,S} \cdot v_i(S)\]
\[
\mbox{ s.t. } \quad \quad \enspace \sum_{i \in [n],\, S \mid j \in S}{x_{i, S}} \le 1 \quad \forall j \in \Omega
\]
\[\quad \quad \quad \quad \quad \quad \sum_{S \subseteq \Omega} x_{i, S} \le 1 
\quad \quad \forall i \in [n]
\]
$$\quad \quad \quad \quad \quad \quad \quad \quad \quad x_{i,S} \in \{0,1\} \quad \quad \quad \forall i \in [n],\, S\subseteq \Omega$$
%
%
%
If we relax integrality constraints of the program we get a linear program whose
dual, $DLP$, is:

\[\min
\sum_{i \in [n]} \pi_i + \sum_{j \in \Omega} p_j \quad \quad \quad
\]
\[s.t. \quad \quad \quad
\pi_i + \sum_{j \in S} p_j \ge v_i(S)\quad
\forall i \in [n],\, S \subseteq \Omega
\]
$$\quad \quad \quad \quad \pi_i \ge 0,\, p_j \ge 0 \quad \quad \quad \forall i \in [n],\, j \in \Omega$$

Bikhchandani and Mamer~\cite{BikhchandaniMa1997} observed that a
Walrasian equilibrium exists if and only if the value of the maximum social
welfare, i.e., a solution to the $IP$, equals the optimum in the dual problem $DLP$. Moreover, in this case,
the set of the optimal dual variables $\{p_j\}_{j \in \Omega}$ is a
Walrasian price vector and the variables $\pi_i$ are such that $\pi_i = u_i(p)$
the maximum utility functions for $p$, hence determined by $p$.

For $p\in \N^m$ we mark a lower bound on
the dual objective (the \emph{Lyapunov}) by $L(p) = \sum_i{u_i(p)} + \sum_j{p_j}$.
This lower bound determines feasible dual variables for every $p$
and is tight for an optimum\footnote{Note that the Lyapunov is dual feasible
by definition, hence it is also an upper bound for the social welfare value.}.
%
The following is then based on Bikhchandani and Mamer~\cite{BikhchandaniMa1997}
and Ausubel~\cite{Ausubel2005}.
\begin{corollary}\label{cor::L::min::is::Wal}
$p\in \N^m$ is Walrasian if and only if its Lyapunov
$L(p)$ is minimum with respect to all possible prices $p$.
\end{corollary}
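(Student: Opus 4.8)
The plan is to package the two cited observations — Bikhchandani and Mamer's duality characterization and the fact that $L(p)$ is the minimum dual objective achievable among dual-feasible solutions that use $p$ as the price component — into a single equivalence. First I would recall the setup: for any fixed price vector $p$, setting $\pi_i = u_i(p)$ gives a dual-feasible point of $DLP$ (since $\pi_i = v_i(D) - p(D) \ge v_i(S) - p(S)$ for every $S$ by definition of a demanded set, and $u_i(p) \ge u_{i,p}(\emptyset) = 0$), and among all dual-feasible points with price component $p$ this is the one minimizing $\sum_i \pi_i$. Hence $L(p) = \sum_i u_i(p) + \sum_j p_j$ is exactly the value of the best dual solution restricted to price vector $p$, and therefore $\min_p L(p)$ equals the optimal value of $DLP$, which by LP duality equals the optimal value of the relaxed $IP$.

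Next I would argue the two directions. If $L(p)$ is minimum over all price vectors, then $L(p)$ equals the optimum of $DLP$, which by weak and strong LP duality equals the optimum of the LP relaxation of $IP$; but $L(p)$ is also an integral-solution-independent upper bound on the social welfare of any integral allocation, and more to the point, one extracts from the complementary-slackness (or from Bikhchandani–Mamer directly) an integral optimal allocation $\{S_i\}$ with $\sum_i v_i(S_i) = L(p)$ in which each bidder receives a set in $D_i(p)$ and all items are sold, i.e. $(p, \{S_i\})$ is a Walrasian equilibrium. Conversely, if $p$ is Walrasian with partition $\{S_i\}$, then each $S_i \in D_i(p)$ so $v_i(S_i) = u_i(p) + p(S_i)$, and summing over $i$ while using that $\{S_i\}$ partitions $\Omega$ gives $\sum_i v_i(S_i) = \sum_i u_i(p) + \sum_j p_j = L(p)$; since the left side is the social welfare of a feasible integral allocation it is at most the LP optimum, which in turn is at most $L(p')$ for every $p'$, forcing $L(p) \le L(p')$ for all $p'$, i.e. $L(p)$ is minimum.

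The genuinely substantive input is Bikhchandani and Mamer's theorem, which supplies the only non-formal ingredient, namely that when the primal and dual optima coincide one can actually produce an \emph{integral} Walrasian allocation rather than merely a fractional one; everything else is LP duality bookkeeping together with the definitional identity $v_i(D) = u_i(p) + p(D)$ for $D \in D_i(p)$. I expect the main obstacle to be stating cleanly why $L(p)$ is simultaneously (i) always at least the social-welfare optimum and (ii) equal to it precisely at Walrasian prices, without re-deriving strong LP duality — so I would simply cite Corollary-adjacent facts and the references \cite{BikhchandaniMa1997,Ausubel2005} for the duality step and keep the argument to the two short implications above.
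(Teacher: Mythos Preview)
The paper does not supply its own proof of this corollary; it states the result as a direct consequence of Bikhchandani--Mamer~\cite{BikhchandaniMa1997} and Ausubel~\cite{Ausubel2005} and moves on. Your proposal is a correct unpacking of precisely that attribution---showing $L(p)$ is the best dual value at fixed $p$, then using LP duality and the cited integrality result for the two implications---so it is essentially the argument the paper is invoking by citation.
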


We look at the space of (vectors from or) functions on $\Omega$ as ordered by the
domination order, namely, for ${p,q}: \Omega \mapsto \N$, $p \leq q$
if for every $j \in \Omega$, $p_j \leq q_j$. If $p \leq q$ we also
say that $p$ is dominated by $q$.
For two price vectors $p,q\in \N^m$ we note by $p\wedge q$ and $p\vee q$ the minimum (meet) and
the maximum (join) coordinate-wise, respectively. That is, $(p\wedge q)_j = \min\{p_j,q_j\}$ and
$(p\vee q)_j = \max\{p_j,q_j\}$.

For two price vectors $p,q\in \N^m$ and an $S\subseteq \Omega$, let $S^=(p,q) = \{j\in S | p_j = q_j\}$.
The \emph{gross substitute} class of valuation is the class in which a player never
drops an item whose price was not increased in an ascending auction dynamics; formally:
\begin{definition} [Gross substitute~\cite{KelsoCr1982}]\label{def:gs}
A valuation is \emph{gross substitute} if for every $p\in \N^m$ and  $S\in D(p)$, for every price vector $q \geq p$, $\exists S'\in
D(q)$ such that $S^=(p,q)\subseteq S'$. 
\end{definition}

In other words, if $S$ is a demanded set for $p$, and $q \geq p$ then there is a demand set for $q$
that contains all elements $j \in S$ for which the price did not change. 
Kelso and Crawford proved that if all valuations are gross substitute,
then a Walrasian equilibrium is guaranteed to exist~\cite{KelsoCr1982}.

\subsection{Ascending and descending auctions}
An \emph{ascending auction} is an iterative algorithm which starts with a low
price vector (we assume $p = 0^m$, the $0$ vector) and in each iteration
finds a set of items to increase, we assume the increase is one unit
of price for every item in the increased set. We denote the increase of
each item in $S\subseteq \Omega$ by one unit from $p\in \N^m$ by $p+1_S$, that is, for each $j\in \Omega$,
\[ (p+1_S)_j = \left\{\begin{array}{ll}
      p_j + 1 & \mbox{ if $j\in S$}\\
      p_j & \mbox{ else}\\
   \end{array} \right. \]
In the same manner, a \emph{descending auction} is an iterative algorithm which starts with a high
price vector, we can assume $\forall j\in \Omega, p_j = \max_i{v_i(\Omega)}$. In each iteration the
algorithm finds a set of items to decrease, where the decrease is again one unit
of price for every item in the decreased set. We denote the decrease of
each item in $S\subseteq \Omega$ by one unit from $p\in \N^m$ by $p-1_S$ where,\footnote{We implicitly assume
that if $p\in \N^m$ is decreased for $S\subseteq \Omega$, then $\forall j\in S, p_j\geq 1$.}
\[ (p-1_S)_j = \left\{\begin{array}{ll} 
      p_j - 1 & \textrm{ if $j\in S$}\\
      p_j & \textrm{ else}\\
   \end{array} \right. \]

\section{Accommodating the Lyapunov}\label{chap::char}
\subsection{Over- and under-demand}
Next we define two important functions that will help us characterize
the Walrasian prices and later also characterize a new framework for ascending and
descending auctions.
Gul and Stacchetti defined a \emph{requirement} function that we will use in order
to define \emph{over-demanded sets}; and then we will define a dual \emph{redundant}
function and use it to define \emph{under-demanded sets}.
For intuition, we think about the demand of a bidder when we increase
the prices of a set of items. Clearly, we know that $\forall p\in \N^m, S,T\subseteq \Omega, u(p+1_S) \geq u_p(T) - |S\cap T|$,
as the left hand side is the maximum utility over all possible sets and the right hand side is the utility of a specific set.
Specifically, taking $T\in D(p)$, a new demand set can arise but we can
be assured that the worse case for that bidder is just an old demand
set which intersect the least with the set of items whose prices were increased.
Same thing with decreasing prices, $\forall p\in \N^m, S,T\subseteq \Omega, u(p-1_S) \geq u_p(T) + |S\cap T|$,
here the worse is just a demand set
that intersect the most with the set of items whose prices were decreased.
We will see later that in terms of cardinality, for monotone gross substitute valuations,
these measures are accurate.

\begin{definition} [Requirement function~\cite{GulSt2000}]
Given $p\in \N^m$ and bidder $i$, the requirement of $i$ on $p$
for $S\subseteq \Omega$ is defined to be $l_{i,p}(S) = \min_{D\in D_i(p)}{|S\cap D|}$.
The \emph{auction requirement} on $p$ for $S\subseteq \Omega$ is defined to be
$l^p(S) = \sum_{i\in [n]}{l_{i,p}(S)}$.
\end{definition}

\begin{definition} [Redundant function]
For $p\in \N^m$ and bidder $i$, the redundant of $i$ on $p$
for $S\subseteq \Omega$ is defined to be $h_{i,p}(S) = \max_{D\in D_i(p)}{|S\cap D|}$.
The \emph{auction redundant} on $p$ for $S\subseteq \Omega$ is defined to be
$h^p(S) = \sum_{i\in [n]}{h_{i,p}(S)}$.
\end{definition}

When the bidder is known we use $l_p(\cdot)$ and $h_p(\cdot)$ (sub-script $p$) as the requirement and redundant
of a bidder on $p$ for a set of items; we use $l^p(\cdot)$ and $h^p(\cdot)$ (supper-script $p$)
only for the auction requirement and auction redundant functions (of \emph{all} bidders).
With the help of the requirement and redundant functions we are ready to define
\emph{over-} and \emph{under-}demanded sets.
\begin{definition} [Over-demand~\cite{GulSt2000}]
Given $p\in \N^m$ the set $S\subseteq \Omega$ is \textup{(}weakly\textup{)} over-demanded if \textup{(}$l^p(S)\geq |S|$\textup{)}, $l^p(S) > |S|$.
\end{definition}

\begin{definition} [Under-demand]
Given $p\in \N^m$ the set $S\subseteq \Omega$ is \textup{(}weakly\textup{)} under-demanded if \textup{(}$h^p(S)\leq |S|$\textup{)}, $h^p(S) < |S|$.
\end{definition}
For $p\in \N^m$ we note the collection of all over-demanded, (weakly over-demanded)
and under-demanded, (weakly under-demanded) sets by $OD(p)$, ($WOD(p)$) and $UD(p)$, ($WUD(p)$), respectively.

\subsection{New gross substitute definitions}
More than three decades since the class of gross substitute was introduced
by Kelso and Crawford~\cite{KelsoCr1982} we gradually reveal its enormous
importance and, at the same time, the tremendous effort we still need to overgo in order to clearly
understand it. After some seminal initial works, such as Gul and Stacchetti~\cite{GulSt1999,GulSt2000},
another keen step towards this end was made recently by Paes-Leme
in a survey that sketches abundant interesting class' properties and connects plentiful different
intriguing definitions~\cite{PaesLeme2014}. We present yet two more novel definitions
and hope that their introduction will serve as another step towards this goal as well.

Consider again what happens to a single bidder's utility function on $p\in \N^m$ if we increase or decrease $p$
for all items of a predefined $S\subseteq \Omega$. It is clear by the definitions of $l_p(S)$ and $h_p(S)$ that
$u(p - 1_S) \geq u(p) + h_p(S)$ and $u(p + 1_S) \geq u(p) - l_p(S)$, since the sets that achieved the maximum
or minimum for $p$ are still available for $p - 1_S$ and $p + 1_S$.
We show next that monotone gross substitute valuations are the \emph{only} valuations for which these two inequalities are equalities
for every $p\in \N^m$ and $S\subseteq \Omega$.

\begin{theorem}\label{GS::def}
Given a valuation over $2^{\Omega}$ the following are equivalent.
\begin{enumerate}
	\item	$\forall p\in \N^m$ and $S\subseteq \Omega$, $u(p) = u(p - 1_S) - h_p(S)$
	\item	$\forall p\in \N^m$ and $S\subseteq \Omega$, $u(p) = u(p + 1_S) + l_p(S)$
	\item	The valuation is monotone gross substitute.
\end{enumerate}
\end{theorem}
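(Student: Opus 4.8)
The plan is to prove the cycle of implications $(3)\Rightarrow(1)$, $(3)\Rightarrow(2)$, and then $\neg(3)\Rightarrow\neg(1)$ together with $\neg(3)\Rightarrow\neg(2)$ (or, more economically, $(1)\Rightarrow(3)$ and $(2)\Rightarrow(3)$ directly, whichever is cleaner). For the forward directions, note that one inequality in each identity is free, as the excerpt already observes: since any $D\in D(p)$ realizing the max (resp.\ min) of $|S\cap D|$ remains available at $p-1_S$ (resp.\ $p+1_S$), we get $u(p-1_S)\ge u(p)+h_p(S)$ and $u(p+1_S)\ge u(p)-l_p(S)$. So the real content of $(3)\Rightarrow(1)$ and $(3)\Rightarrow(2)$ is the reverse inequality: from a demand set at the perturbed price, gross substitutability must let us produce a demand set at $p$ whose overlap with $S$ is large enough (resp.\ small enough).

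First I would do $(3)\Rightarrow(2)$. Take $S'\in D(p+1_S)$. Applying the gross substitute property in the direction from $p+1_S$ \emph{down} to $p$ — i.e., using the contrapositive/symmetric form of Definition~\ref{def:gs}, which says that when we lower prices on $S$ no item outside $S$ that was demanded is dropped — we obtain $S''\in D(p)$ with $S'\setminus S\subseteq S''$. Then $u_{p}(S'') = u_{p+1_S}(S'') + |S\cap S''| \le u(p) + |S\cap S''|$, while on the other side $u(p+1_S) = u_{p+1_S}(S') = v(S')-p(S')-|S\cap S'| \ge v(S'')-p(S'')-|S\cap S''|-(\text{slack})$; carefully bookkeeping the utilities and using $l_p(S)\le |S\cap S''|$ and $u(p)=u_p(S'')$ forces $u(p+1_S)\le u(p)+|S\cap S''|-|S\cap S'|$ and hence $u(p+1_S)+l_p(S)\le u(p)$, the desired reverse inequality. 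The symmetric argument, applying gross substitute in the ascending direction from $p$ to $p-1_S$, gives $(3)\Rightarrow(1)$; here one shows that a demand set at $p$ of maximal overlap with $S$ becomes, after raising prices on $S$, a set that still certifies $u(p-1_S)\le u(p)+h_p(S)$. A subtlety to watch is that gross substitute is stated for a unit increase on a single set; iterating it coordinate by coordinate over $S$ (and over the single-unit steps) is routine but must be spelled out, and one should confirm the ``descending'' form of the property is genuinely equivalent to Definition~\ref{def:gs} for monotone valuations (this is classical, but I would either cite it or give the one-line argument).

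For the converse, $(1)\Rightarrow(3)$ (and symmetrically $(2)\Rightarrow(3)$): assume the valuation is \emph{not} gross substitute. Then there exist $p$, $S\in D(p)$, and $q\ge p$ such that no demand set at $q$ contains $S^=(p,q)$. Since any $q\ge p$ is reached from $p$ by a sequence of single-coordinate unit increments, a standard argument lets us assume $q = p+1_{\{k\}}$ for a single item $k$ with unit increase, and that the failure is: every $D'\in D(p+1_{\{k\}})$ omits some $j\in S\setminus\{k\}$. I would then pick the set $T = \Omega\setminus\{k\}$ (so that the prices raised are exactly outside $T$... — more carefully, I choose $T$ so that $1_T$ realizes exactly the increment to $q$, i.e.\ take $T=\{k\}$ with the perturbation being $p+1_T$). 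Plugging into identity $(2)$ at price $p$ with set $\{k\}$: $u(p) = u(p+1_{\{k\}}) + l_p(\{k\})$, where $l_p(\{k\})\in\{0,1\}$ measures whether \emph{every} demand set at $p$ contains $k$. The gross-substitute failure says a demand set at $p+1_{\{k\}}$ must ``lose'' an item of $S$ other than $k$, which costs utility strictly more than the $l_p(\{k\})$ that identity $(2)$ allows — giving the contradiction $u(p) > u(p+1_{\{k\}}) + l_p(\{k\})$. The symmetric construction with a price decrease and identity $(1)$ handles $(1)\Rightarrow(3)$.

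The main obstacle I anticipate is the bookkeeping in the converse direction: translating ``no demand set at $q$ contains $S^=(p,q)$'' into a clean violation of the \emph{cardinality} identity. The definition of gross substitute is about containment of a specific set, whereas $(1)$ and $(2)$ only track $|S\cap D|$; bridging these requires choosing the perturbation set $S$ in the theorem's identities to be (the complement of, or exactly) the set whose prices move, and then arguing that a utility-maximizing set after the perturbation cannot both be optimal and keep the required overlap — this is where the single-item reduction and a careful exchange/greedy argument on which item gets dropped will be needed. The forward direction's only delicate point, by contrast, is justifying the iteration of the single-step gross substitute property, which is standard.
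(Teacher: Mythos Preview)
Your plan has genuine gaps in both directions, and the paper proceeds quite differently.

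\textbf{Structure.} The paper proves the cycle $1\Rightarrow 2\Rightarrow 3\Rightarrow 1$, not the four implications $3\Leftrightarrow 1$ and $3\Leftrightarrow 2$ you propose. In particular, $1\Rightarrow 2$ is done by an elementary argument that uses neither gross substitute nor any external lemma: assuming (1) holds everywhere but (2) fails at some $p,S$, apply (1) at $p+1_S$ to get $u(p+1_S)=u(p)-h_{p+1_S}(S)$, hence $l_p(S)>h_{p+1_S}(S)$; but the set $D\in D(p+1_S)$ realising $h_{p+1_S}(S)$ satisfies $u_p(D)=u(p+1_S)+|S\cap D|=u(p)$, so $D\in D(p)$ and $l_p(S)\le|S\cap D|=h_{p+1_S}(S)$, contradiction.

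\textbf{Forward direction $(3)\Rightarrow(1)$.} Your bookkeeping for $(3)\Rightarrow(2)$ does not close: from $S'\in D(p+1_S)$ and a descending-GS witness $S''\in D(p)$ with $S'\setminus S\subseteq S''$ you get $l_p(S)\le|S\cap S''|$, but there is no reason for $|S\cap S''|\le|S\cap S'|$, so the chain to $u(p+1_S)+l_p(S)\le u(p)$ breaks. The paper does not use Definition~\ref{def:gs} directly here. Instead it uses the \emph{single improvement} characterisation of monotone gross substitute (Gul--Stacchetti) to prove $(3)\Rightarrow(1)$: by induction on $k=u(p)-u_p(S')$, every $S'$ is contained in $D\cup R$ for some $D\in D(p)$ and $|R|\le k$. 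This immediately bounds $|S'\cap S|\le h_p(S)+k$, hence $u_{p-1_S}(S')\le u(p)+h_p(S)$. The single-improvement step furnishes exactly the exchange that your ``iterate GS coordinate by coordinate'' sketch is missing.

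\textbf{Converse $(2)\Rightarrow(3)$.} Your reduction to a single-item increase $q=p+1_{\{k\}}$ and the choice $T=\{k\}$ cannot produce a violation of (2): one checks directly that identity (2) with $T=\{k\}$ always holds (both cases $l_p(\{k\})\in\{0,1\}$ go through), regardless of whether $S\setminus\{k\}$ survives into $D(p+1_{\{k\}})$. The mismatch you flag---containment versus cardinality---is real and is not resolved by your sketch. The paper bridges it by invoking the Reijnierse--van Gellekom--Potters structural theorem (Theorem~\ref{non::gs}): for a non-GS valuation there exist $q$, $A$, and $j_1,j_2,j_3$ with $D(q)$ equal to $\{A,A\cup\{j_1,j_2\}\}$ or $\{A\cup\{j_3\},A\cup\{j_1,j_2\}\}$. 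From this rigid picture one takes $p=q-1_{j_1}$ and $S=\{j_1,j_2\}$, shows $l_p(S)=2$, and computes $u(p)=u(q)+1<u(q)+2\le u(p+1_S)+l_p(S)$. Without this external structural result (or something of comparable strength) your converse argument does not go through.
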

\begin{proof}
We will see a cyclic derivation
\begin{itemize}
	\item $1\Rightarrow 2$\\
	By definition, $\forall p\in \N^m$ and $S\subseteq \Omega$, $u(p) \leq u(p + 1_S) + l_p(S)$.
	Assume $\exists p\in \N^m$ and $S\subseteq \Omega$, $u(p) < u(p + 1_S) + l_p(S)$ and look now on
	the price vector $p+1_S$ and a set $D\in D(p+1_S)$ such that $D = \arg\max_{D\in D(p+1_S)}{|D\cap S|}$.
	Since $u(p + 1_S) = u(p + 1_S - 1_S) - h_{p+1_S}(S)$ we have that $l_p(S) > h_{p+1_S}(S)$. But by definition
	of $D$, $u(p) = u(p + 1_S) + |S\cap D| = u_p(D)$, hence, $D\in D(p)$ and clearly, $l_p(S) \leq h_{p+1_S}(S)$.
	\item $2\Rightarrow 3$\\
	Again, by definition of $l_p(S)$ it holds that $u(p) \leq u(p+1_S) + l_p(S)$. We show that for a non-gross
	substitute valuation there exist $p\in \N^m$ and an $S\subseteq \Omega$ such that $u(p) < u(p+1_S) + l_p(S)$.
	Let $q,A,j_1,j_2,j_3$ be as described in Theorem~\ref{non::gs} and recall that either $D(q) = \{A,A\cup\{j_1,j_2\}\}$
	or $D(q) = \{A\cup\{j_3\}, A\cup\{j_1,j_2\}\}$. By monotonicity $v(A) \leq v(A\cup \{j_1\})$
	(respectively, $v(A\cup \{j_3\}) \leq v(A\cup \{j_1,j_3\})$) hence the fact that $A\cup \{j_1\} \notin D(q)$
	(resp., $A\cup \{j_1,j_3\} \notin D(q)$) means that $q_{j_1} > 0$. For $p = q - 1_{j_1}$ and $S = \{j_1,j_2\}$
	we have that $\forall D\in D(p),\ j_1,j_2\in D$; or else $u(q) < u_p(A\cup\{j_1,j_2\}) \leq u_p(D) \leq u_q(D) + 1 \leq u(q)$,
	where the last inequality comes from the assumption that $j_1,j_2\in D$ does not hold and the fact that $u_p(A) < u_p(A \cup\{j_1,j_2\})$
	and $u_p(A\cup \{j_3\}) < u_p(A \cup\{j_1,j_2\})$, so $D\notin D(q)$. Hence we conclude that $l_p(S) = 2$ and therefore,
	$u(p) = u_q(A \cup\{j_1,j_2\}) + 1 < u_q(A \cup\{j_1,j_2\}) + 2 \leq u(p+1_S) + l_p(S)$.
	\item $3\Rightarrow 1$\\
	Indeed, by definition of the redundant function $h$, $u(p) \leq u(p - 1_S) - h_p(S)$.
	We will show next that for any $S'\subseteq \Omega$, $u(p) \geq u_{p - 1_S}(S') - h_p(S)$
	hence the direction will follow. Assume $u(p) - u_p(S') = k$, we show that $\exists R,D\subseteq \Omega$
	such that $D\in D(p),\ |R|\leq k$ and $S'\subseteq D\cup R$. Note that then $|S'\cap S|\leq h_p(S) + k$
	so $u_{p - 1_S}(S') \leq u_p(S') + h_p(S) + k \leq u(p) + h_p(S)$ as claimed. To show the existence of
	$D,R$ we use induction on $k$. For $k = 0$ just take $D = S'$ and $R = \emptyset$. So $k > 0$ hence by
	the single improvement property $\exists T$ such that $u_p(T) > u_p(S')$ and $|S'\setminus T| \leq 1$.
	Now since $u_p(T) > u_p(S')$ we can apply the induction hypothesis on $T$ and find two sets $D,R'$ such
	that $D\in D(p)$, $|R'| \leq u(p) - u_p(T)$ and $T\subseteq D\cup R'$. We claim that the same set $D$
	and the set $R = R'\cup (S'\setminus T)$ suffice. Indeed, $S'\subseteq D\cup R$, $D\in D(p)$ and
	$|R|\leq u(p) - u_p(T) + 1\leq u_p - u_p(S')\leq k$.
	Hence, for monotone gross substitute valuation it holds that $u(p) = u(p - 1_S) - h_p(S)$.
\end{itemize}
\end{proof}

The next two corollaries play a key role in the following \emph{price characterization} result; the first one
follows the definitions of the Lyapunov and the redundant/requirement of \emph{all} bidders; adding the
over-/under-demand and Corollary~\ref{cor::L::min::is::Wal} we derive the second.
\begin{corollary} [of Theorem~\ref{GS::def}]\label{L::char} 
Assume all valuations are monotone gross substitute, then for each $p\in \N^m$ and each $S\subseteq \Omega$
\begin{enumerate}
	\item	$L(p-1_S) = L(p) + h^p(S) - |S|$
	\item	$L(p+1_S) = L(p) - l^p(S) + |S|$
\end{enumerate}
\end{corollary}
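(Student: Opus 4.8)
The plan is to derive this as a direct consequence of Theorem~\ref{GS::def}, applied bidder-by-bidder and then summed. Recall $L(p) = \sum_i u_i(p) + \sum_j p_j = \sum_i u_i(p) + p(\Omega)$. The two clauses are symmetric (one for a price decrease, one for a price increase), so I would prove clause~1 in full and then indicate that clause~2 follows by the same argument using part~2 of Theorem~\ref{GS::def} in place of part~1.

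First I would write out $L(p - 1_S)$ from its definition: $L(p-1_S) = \sum_i u_i(p-1_S) + (p-1_S)(\Omega)$. The second summand is immediate: decreasing the price of each item in $S$ by one unit lowers the total price sum by exactly $|S|$, so $(p-1_S)(\Omega) = p(\Omega) - |S|$. For the first summand, since every bidder's valuation is monotone gross substitute, part~1 (applied in the equivalent form $u_i(p-1_S) = u_i(p) + h_{i,p}(S)$, which is exactly what Theorem~\ref{GS::def} gives after rearranging) yields $u_i(p-1_S) = u_i(p) + h_{i,p}(S)$ for each $i$. Summing over $i \in [n]$ gives $\sum_i u_i(p-1_S) = \sum_i u_i(p) + \sum_i h_{i,p}(S) = \sum_i u_i(p) + h^p(S)$, using the definition of the auction redundant $h^p(S) = \sum_i h_{i,p}(S)$. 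Adding the two pieces: $L(p-1_S) = \sum_i u_i(p) + h^p(S) + p(\Omega) - |S| = L(p) + h^p(S) - |S|$, which is clause~1.

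For clause~2 the argument is identical in structure: $(p+1_S)(\Omega) = p(\Omega) + |S|$, and part~2 of Theorem~\ref{GS::def}, rearranged, gives $u_i(p+1_S) = u_i(p) - l_{i,p}(S)$ for each bidder, which sums to $\sum_i u_i(p+1_S) = \sum_i u_i(p) - l^p(S)$ by definition of the auction requirement $l^p(S) = \sum_i l_{i,p}(S)$. Combining, $L(p+1_S) = L(p) - l^p(S) + |S|$.

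Honestly, there is no real obstacle here — this corollary is pure bookkeeping once Theorem~\ref{GS::def} is in hand. The only point requiring a moment's care is the implicit assumption attached to the $p-1_S$ notation, namely that $p_j \geq 1$ for all $j \in S$, so that $p - 1_S$ is a legitimate element of $\N^m$; I would note that clause~1 is to be read under this standing assumption. Everything else is just expanding definitions and summing a per-bidder identity over all $n$ bidders.
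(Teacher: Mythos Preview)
Your proposal is correct and is exactly the argument the paper has in mind: the paper states only that the corollary ``follows the definitions of the Lyapunov and the redundant/requirement of \emph{all} bidders,'' and your proof is precisely the spelled-out version of that sentence---apply Theorem~\ref{GS::def} to each bidder, sum over $i$ to pass from $l_{i,p},h_{i,p}$ to $l^p,h^p$, and account for the $\pm|S|$ change in $\sum_j p_j$.
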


\begin{corollary} [of Corollary~\ref{cor::L::min::is::Wal} and Corollary~\ref{L::char}]\label{cor::cor}
When all valuations are monotone gross substitute, if for $p\in \N^m$ there exists an under- or over-demanded set, then $L(p)$
is not minimum and $p$ is not a Walrasian price vector.
\end{corollary}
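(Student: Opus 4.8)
The plan is to derive this corollary directly from the two ingredients cited in its label, namely Corollary~\ref{cor::L::min::is::Wal} and Corollary~\ref{L::char}. The structure of the argument is a short contrapositive-style chain: assume $p$ admits an over-demanded or under-demanded set, exhibit a neighbouring price vector with strictly smaller Lyapunov, and then invoke Corollary~\ref{cor::L::min::is::Wal} to conclude that $p$ cannot be Walrasian.

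Concretely, suppose first that $S\subseteq\Omega$ is over-demanded for $p$, i.e. $l^p(S)>|S|$. By part~2 of Corollary~\ref{L::char}, $L(p+1_S)=L(p)-l^p(S)+|S|<L(p)$, so $L(p)$ is not minimum among all price vectors. Symmetrically, if $S\subseteq\Omega$ is under-demanded for $p$, i.e. $h^p(S)<|S|$, then part~1 of Corollary~\ref{L::char} gives $L(p-1_S)=L(p)+h^p(S)-|S|<L(p)$, so again $L(p)$ is not minimum. (One should note in passing that when $S$ is under-demanded and some coordinate of $p$ on $S$ is $0$, the decrement $p-1_S$ is not a legal price vector; but the statement only claims $L(p)$ is not minimum ``with respect to all possible prices'', and in that degenerate case one can either restrict $S$ to its support or observe that $h^p(S)<|S|$ forces at least one positive coordinate — this is the only place where a small amount of care is needed.) In either case, since $L(p)$ fails to be minimum, Corollary~\ref{cor::L::min::is::Wal} immediately yields that $p$ is not a Walrasian price vector.

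The proof is essentially a two-line calculation once Corollary~\ref{L::char} is in hand, so there is no serious obstacle; the only subtlety — and the step I would flag — is the boundary issue just mentioned, ensuring that the witnessing perturbation $p+1_S$ or $p-1_S$ is a genuine element of the search space whose Lyapunov we are comparing against. Everything else is a direct substitution into the identities of Corollary~\ref{L::char} together with the defining inequalities of over- and under-demand.
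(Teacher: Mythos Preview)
Your proposal is correct and is exactly the derivation the paper intends: the paper gives no separate proof of this corollary, presenting it as immediate from Corollary~\ref{cor::L::min::is::Wal} and Corollary~\ref{L::char}, and you have simply spelled out that two-line substitution. The boundary issue you flag for $p-1_S$ is real but is handled in the paper only by the blanket footnote assumption that decrements are taken only where $p_j\geq 1$; your suggestion to restrict to the positive-price part of $S$ is a reasonable way to make this rigorous.
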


\section{Price characterization}\label{sec::price}
We present now the main Walrasian price characterization. This theorem in turn, is the main
building block in the correctness' proof of our ascending and descending auctions framework later.
\begin{theorem}\label{price::char}
If all valuations are monotone gross substitute then given $p\in \N^m$ the following hold:
\begin{itemize}
	\item	$p$ is Walrasian if and only if there is no over-demanded and no under-demanded set.
	\item	$p$ is minimum Walrasian if and only if there is no over-demanded and no non-trivial weakly under-demanded set.
	\item	$p$ is maximum Walrasian if and only if there is no non-trivial weakly over-demanded and no under-demanded set.
\end{itemize}
\end{theorem}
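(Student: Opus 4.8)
The plan is to prove the three bullets together by exploiting the duality already set up: by Corollary~\ref{cor::L::min::is::Wal}, $p$ is Walrasian exactly when $L(p)$ is globally minimum over $\N^m$, and by Corollary~\ref{L::char} the single-step changes of $L$ are governed precisely by $h^p(S)-|S|$ (for a unit decrease on $S$) and $|S|-l^p(S)$ (for a unit increase on $S$). The ``only if'' directions are essentially Corollary~\ref{cor::cor}: if an over-demanded set $S$ exists then $L(p+1_S) = L(p) - l^p(S) + |S| < L(p)$, so $L(p)$ is not minimum and $p$ is not Walrasian; dually for an under-demanded set with a decreasing step. For the two refined bullets I would additionally observe that a \emph{non-trivial} weakly under-demanded $S$ (so $h^p(S)\le|S|$, $S\neq\emptyset$) gives $L(p-1_S)\le L(p)$, i.e. a downward move in price that does not increase $L$; iterating this, I can reach a Walrasian vector coordinate-wise below $p$ and strictly below on $S$, so $p$ cannot be the minimum Walrasian. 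The over-demanded/maximum case is symmetric.

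The harder ``if'' directions are where the real content lies. Suppose there is no over-demanded and no under-demanded set; I must show $L(p)$ is globally minimum. The natural strategy is a local-to-global argument: show that if $L(p)$ is not minimum then there is a single coordinate-wise step (either $p+1_S$ or $p-1_S$ for some $S$) that strictly decreases $L$, and then argue that such a step forces an over- or under-demanded set at $p$. Concretely, pick $q$ with $L(q) < L(p)$; I want to produce from the pair $(p,q)$ a set witnessing over- or under-demand. The standard trick (this is how Gul--Stacchetti and Ausubel proceed for the Lyapunov function) is to use the lattice/submodularity structure: $L$ restricted to moves in a fixed ``up'' direction is convex along the integer path, so if $L(q)<L(p)$ with $q\ge p$ then already the first step $p\to p+1_S$ for $S=\{j: q_j>p_j\}$ decreases $L$, giving $l^p(S)>|S|$, an over-demanded set. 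For general $q$ one splits the move into an up-part and a down-part via $p\vee q$ and $p\wedge q$; submodularity of $L$ (which follows from gross substitutes, or can be cited from the appendix) gives $L(p\vee q) + L(p\wedge q) \le L(p) + L(q) < 2L(p)$, so at least one of $p\vee q, p\wedge q$ has strictly smaller Lyapunov than $p$, reducing to the monotone case. Then the first step toward $p\vee q$ exhibits an over-demanded set and the first step toward $p\wedge q$ an under-demanded set, contradicting the hypothesis.

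For the minimum-Walrasian bullet, assuming no over-demanded and no non-trivial weakly under-demanded set, the above argument first shows $p$ is Walrasian; it remains to show $p$ is the \emph{least} Walrasian vector. If $p'$ is any Walrasian vector, consider $p\wedge p'$: I would argue $p\wedge p'$ is also Walrasian (the Walrasian set is a lattice — available from the cited Gul--Stacchetti results, or re-derivable from submodularity of $L$ as above since $L(p\wedge p')+L(p\vee p')\le L(p)+L(p') = 2\cdot\min L$ forces equality, so $p\wedge p'$ attains $\min L$). Then $S = \{j: p_j > p'_j\} = \{j: p_j > (p\wedge p')_j\}$ satisfies $L(p) = L((p\wedge p')+1_S)$; but $L((p\wedge p')+1_S) - L(p\wedge p') = |S| - l^{p\wedge p'}(S) \ge 0$ with equality here, and convexity of $L$ along this path together with both endpoints being minimizers forces every intermediate step, in particular the \emph{last} one arriving at $p$, to have $l^{p-1_S}(\cdot)$... more cleanly: if $S\neq\emptyset$ then the step $p - 1_{S}$ keeps $L$ minimum, so $h^p(S) - |S| = L(p-1_S) - L(p) = 0$, i.e. $h^p(S) = |S|$, making $S$ a non-trivial weakly under-demanded set — contradiction. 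Hence $S = \emptyset$, i.e. $p \le p'$, so $p$ is the minimum Walrasian. The maximum-Walrasian bullet is proved by the mirror-image argument with $\vee$, $l^p$, and weakly over-demanded sets. The main obstacle I anticipate is cleanly justifying the submodularity/lattice step and the one-step convexity of $L$ along an axis-monotone integer path without re-deriving Gul--Stacchetti in full; I would lean on the appendix's collection of known gross-substitute facts for that, keeping the new work confined to the translation between ``$L$ strictly decreases in one step'' and ``an over-/under-demanded set exists,'' which is exactly the quantitative content of Theorem~\ref{GS::def} and Corollary~\ref{L::char}.
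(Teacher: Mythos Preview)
Your overall architecture matches the paper's: use Corollary~\ref{cor::cor} for the easy directions, use Lyapunov submodularity (Lemma~\ref{L::SM}) to reduce a general witness $q$ with $L(q)<L(p)$ to a \emph{comparable} one, and then extract an over- or under-demanded set from the comparable case; for the min/max bullets, use the lattice property plus a one-step ``convexity'' reduction to $q'=p\pm 1_S$. Your reduction to the comparable case (replace $q$ by whichever of $p\vee q,\,p\wedge q$ has smaller Lyapunov) is in fact a bit cleaner than the paper's ``closest $q$'' device, but both land in the same place.

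The one genuine gap is the step you flag yourself: from $q\ge p$ and $L(q)<L(p)$ you assert that the \emph{first} unit step $p\to p+1_S$ with $S=\{j:q_j>p_j\}$ already decreases $L$, citing ``convexity along the integer path.'' Ordinary convexity of $L$ on $\R^m$ does not give this, because $q$ is generally \emph{not} on the ray $p+\N\cdot 1_S$; the one-dimensional convexity argument only applies when $q-p$ is a multiple of a single $0\text{--}1$ vector. What the paper does here is precisely the missing ingredient: write $q-p=\sum_{k=1}^{\tau}1_{R_k}$ with nested layers $R_1\supseteq R_2\supseteq\cdots\supseteq R_\tau$ (so $R_1=S$), expand
\[
L(q)-L(p)\;=\;\sum_{k}\bigl(|R_k|-l^{\,p^{(k-1)}}(R_k)\bigr),\qquad p^{(k-1)}=p+\textstyle\sum_{\eta<k}1_{R_\eta},
\]
and then invoke (a corollary of) Theorem~\ref{thm::2::GS} to get the monotonicity $l^{\,p^{(k-1)}}(R_k)\le l^{\,p}(R_k)$. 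Since the left-hand side is negative, some $R_\eta$ must satisfy $l^{\,p}(R_\eta)>|R_\eta|$, i.e.\ $R_\eta\in OD(p)$. So the local-to-global step really rests on the requirement-monotonicity of Theorem~\ref{thm::2::GS}, not on Lemma~\ref{L::SM} alone; your plan to ``lean on the appendix'' is right, but you need that specific lemma, not just submodularity.

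For the second and third bullets your argument coincides with the paper's, including the hand-waved ``by convexity we may take $q'=p-1_S$'' step (the paper states it the same way). Your ``only if'' for the refined bullets is correct and does not need the iteration you mention: if $p$ is minimum Walrasian and $\emptyset\ne S\in WUD(p)$, then $L(p-1_S)\le L(p)=\min L$, so $p-1_S$ is already Walrasian and strictly below $p$.
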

\begin{proof}
One direction of the first item is Corollary~\ref{cor::cor}. Indeed, if $p\in \N^m$ is a
Walrasian price vector then by Corollary~\ref{cor::L::min::is::Wal} the dual objective $L(p)$ is minimum and hence there cannot be any under- or over-demanded
set.
For the other direction assume that for $p\in \N^m$ there are no over-demanded or under-demanded sets;
we will show that $L(p)$ is minimum, hence, by Corollary~\ref{cor::L::min::is::Wal}, $p$ is Walrasian.
Assume $L(p)$ is not minimum and let $q\in \N^m$ be the `closest' such that $L(q) < L(p)$, that is,
$q$ minimizes (over all price vectors $\widehat{q}\in \N^m$, such that $L(\widehat{q})<L(p)$) the $dist(p,q) = \sum_j{|p_j - q_j|}$.
Let $x^-$ be a vector such that $x^-_j = \max{\{p_j - q_j,0\}}$ and let $x^+$ be a vector such that $x^+_j = \max{\{q_j - p_j,0\}}$.
Let $p^- =  p - x^-$ and $p^+ = p + x^+$.
\begin{claim}
Either $p$ dominates $q$ ($p\geq q$) or $q$ dominates $p$ ($q\geq p$).
\end{claim}
\begin{proof}
Assume not, that is, $p^- \neq p$ and $p^+ \neq p$; and by definition of $q$, $L(p^-) = L(p^+) = L(p)$.
We will use the submodularity of the Lyapunov, Lemma~\ref{L::SM}, in order to show that $L(q) \geq L(p)$ which is a contradiction
to $q$'s definition, hence, either $x^- = 0^m$ and $p\leq q$ or $x^+ = 0^m$ and $p\geq q$.

By the Lyapunov submodularity $L(p \wedge q) + L(p \vee q) \leq L(p) + L(q)$. But since $p \wedge q = p^-$
and $p \vee q = p^+$ and the fact that $L(p^-) = L(p^+) = L(p)$, we come to a contradiction.
\end{proof}

Let $R = \{j| p_j \neq q_j\}$ and note that if $q = p + 1_R$ or $q = p - 1_R$ then by Corollary~\ref{L::char} and the fact that there are
no over-demanded or under-demanded sets $L(p)\leq L(q)$ contradicting the existence of such a $q$.
Assume now that $q\geq p$. Let $R_1 = R$, $R_2 = \{j| q_j - p_j \geq 2\}$, $R_3 = \{j| q_j - p_j \geq 3\}\ \cdots\ R_{\tau} = \{j| q_j - p_j \geq \tau\}$.
Note that $R_1 \supseteq R_2 \supseteq R_3 \supseteq \cdots \supseteq R_{\tau}$ and that $q = p + 1_{R_1} + 1_{R_2} + \cdots + 1_{R_{\tau}}$.
By repeatedly applying Corollary~\ref{L::char} we get that $L(q) = L(p) - \sum_{k\in [\tau]}{l^{p + \sum_{\eta < \tau}{1_{R_\eta}}}(R_k)} + \sum_{k}{|R_k|}$,
hence, $\sum_{k\in [\tau]}{l^{p + \sum_{\eta < \tau}{1_{R_\eta}}}(R_k)} > \sum_{k}{|R_k|}$.
By a corollary of Theorem~2 of Gul and Stacchetti~(see Theorem~\ref{thm::2::GS} in the appendix) it can be shown that $l^{p + \sum_{\eta < \tau}{1_{R_\eta}}}(R_k) \leq l^p(R_k)$.
Together we conclude that $\sum_{k\in [\tau]}{l^p(R_k)} > \sum_{k}{|R_k|}$, so $\exists \eta$ for which $l^p(R_{\eta}) > |R_{\eta}|$ which contradicts
the fact that $q$ is the closest to $p$. The case for $p \geq q$ is symmetric, therefore omitted.

Next we show that there is also no non-trivial weakly under-demanded set if and only if $p$ is minimum Walrasian.
Assume first that there is no non-trivial set of weakly under-demand and that exists $q\in \N^m$ such that $q \not\geq p$
and $L(q) = L(p)$.
Recall that there is no non-trivial weakly under-demanded set means that $\forall S\neq \emptyset,\ h^p(S) > |S|$
and that the lattice property of Walrasian prices (Lemma~\ref{lattice}) means that if $\exists q \not\geq p$
such that $L(q) = L(p)$, then $\exists q' \leq p$ such that $q'\neq p$ and $L(q') = L(p)$.
By convexity and the fact that $L(p)$ is minimum, we can assume that $q' = p - 1_S$, but then by Corollary~\ref{L::char},
$S$ is weakly under-demanded, a contradiction.
For the other direction, if $p$ is minimum Walrasian then $\forall q \leq p,\mbox{ if } p\neq q \mbox{ then } L(p) < L(q)$.
In particular, $\forall S\neq \emptyset,\ L(p) < L(p - 1_S)$; so, again by Corollary~\ref{L::char},
$S$ is not weakly under-demanded. This completes the proof that there is no non-trivial weakly under-demanded set if and only if $p$ is minimum Walrasian.
The proof that maximum Walrasian is the absence of non-trivial weakly over-demanded sets is similar.
\end{proof}

\section{Auctions' general framework}\label{sec::auctions}

We define next two more set collections, \emph{excess demand} and \emph{dearth demand},
that will be used for the ascending and descending auctions framework.
As noted, if an ascending auction only increase prices of
sets in excess demand, it is guaranteed to end in the minimum Walrasian price.
If the auction increases prices of a set not in excess demand, it might
not find the minimum Walrasian equilibrium.

In the same manner, if a descending auction only decreases prices of elements
in sets of dearth demand, i.e., \emph{excess supply}, it will end in the maximum Walrasian price; if it
decreases prices of a set not in dearth demand, it might not be able to find
the maximum Walrasian price vector.

\begin{definition} [Excess demand]
Given $p\in \N^m$, $S\subseteq \Omega$ is in excess demand if $S\in OD(p)$ and $\forall T\subseteq S,\ \mbox{if } T\neq \emptyset,\mbox{ then } T\notin WUD(p + 1_S)$.
\end{definition}

\begin{definition} [Dearth demand]
Given $p\in \N^m$, $S\subseteq \Omega$ is in dearth demand if $S\in UD(p)$ and $\forall T\subseteq S,\ \mbox{if } T\neq \emptyset,\mbox{ then } T\notin WOD(p - 1_S)$.
\end{definition}
For $p\in \N^m$ we note the collection of all excess demanded sets by $ED(p)$
and the collection of all dearth demanded sets by $DD(p)$.

It is crucial to mention that known ascending and descending auctions
are special cases of `excess' or `dearth' demand auctions. In~\cite{Ben-ZwiLaNe2013} it is proved
that the auctions of Gul and Stacchetti~\cite{GulSt2000} and of
Ausubel~\cite{Ausubel2005} are identical. These auctions increase only
a set called a \emph{minimal minimizer}\footnote{Minimal minimizer is called excess demand on~\cite{GulSt2000}.} or decrease a set called a
\emph{maximal minimizer}. We show next that these are strict special cases of
the excess or dearth demand auctions.

\begin{definition} [Minimal minimizer~\cite{Ausubel2005}]
$S\subseteq \Omega$ is a \emph{minimal minimizer} for $p\in \N^m$ if
$L(p+1_S) < L(p)$ and $\forall T\subseteq \Omega, L(p+1_S)\leq L(p+1_T)$
where the inequality is strict for $T\subset S$.
\end{definition}
\begin{definition} [Maximal minimizer~\cite{Ausubel2005}]
$S\subseteq \Omega$ is a \emph{maximal minimizer} for $p\in \N^m$ if
$L(p-1_S) < L(p)$ and $\forall T\subseteq \Omega, L(p-1_S)\leq L(p-1_T)$
where the inequality is strict for $T\subset S$.
\end{definition}

\begin{lemma}\label{lemma::mm::is::ed}
For $p\in \N^m$,
\begin{itemize}
	\item If $S\subseteq \Omega$ is a minimal minimizer for $p$ then $S\in ED(p)$.
	\item If $S\subseteq \Omega$ is a maximal minimizer for $p$ then $S\in DD(p)$.
\end{itemize}
\end{lemma}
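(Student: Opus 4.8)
The plan is to prove both items by unpacking the definitions and using Corollary~\ref{L::char} to translate Lyapunov comparisons into statements about $l^p$ and $h^p$; by symmetry I will only detail the minimal-minimizer case. So suppose $S$ is a minimal minimizer for $p$. First I would establish that $S\in OD(p)$: since $L(p+1_S)<L(p)$, Corollary~\ref{L::char}(2) gives $L(p)-l^p(S)+|S|<L(p)$, i.e. $l^p(S)>|S|$, which is exactly $S\in OD(p)$. (Note this already shows $S\neq\emptyset$.) The remaining, and main, task is to verify the second defining clause of excess demand: for every nonempty $T\subseteq S$, $T\notin WUD(p+1_S)$, i.e. $h^{p+1_S}(T)>|T|$.

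For this I would argue by contradiction: suppose some nonempty $T\subseteq S$ satisfies $h^{p+1_S}(T)\le |T|$. The idea is to move from $p+1_S$ down by $1_T$ and compare Lyapunov values. Applying Corollary~\ref{L::char}(1) at the price vector $p+1_S$ and the set $T$ gives
\[
L\bigl((p+1_S)-1_T\bigr)=L(p+1_S)+h^{p+1_S}(T)-|T|\le L(p+1_S),
\]
using the assumed inequality. But $T\subseteq S$, so $(p+1_S)-1_T=p+1_{S\setminus T}$. Hence $L(p+1_{S\setminus T})\le L(p+1_S)$. Since $S\setminus T\subsetneq S$ (because $T\neq\emptyset$), this contradicts the minimal-minimizer requirement that $L(p+1_S)<L(p+1_{S\setminus T})$ strictly for proper subsets. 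This establishes $T\notin WUD(p+1_S)$ for every nonempty $T\subseteq S$, and together with $S\in OD(p)$ we conclude $S\in ED(p)$.

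The dual statement, that a maximal minimizer $S$ for $p$ lies in $DD(p)$, follows by the completely symmetric argument: $L(p-1_S)<L(p)$ together with Corollary~\ref{L::char}(1) yields $h^p(S)<|S|$, so $S\in UD(p)$; and if some nonempty $T\subseteq S$ had $l^{p-1_S}(T)\le|T|$... wait, one wants $T\notin WOD(p-1_S)$, i.e. $l^{p-1_S}(T)\le|T|$ is the bad case — so assume $l^{p-1_S}(T)\ge|T|$, apply Corollary~\ref{L::char}(2) at $p-1_S$ with $T$ to get $L((p-1_S)+1_T)=L(p-1_T')$ with $p-1_{S\setminus T}$, obtaining $L(p-1_{S\setminus T})\le L(p-1_S)$ and contradicting maximality. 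I expect the only mild subtlety to be keeping the strict-versus-weak inequalities straight and checking the edge case $T=S$ is excluded by the "$T\subset S$" clause in the minimizer definitions (when $T=S$, $S\setminus T=\emptyset$ and no contradiction with the non-strict $L(p+1_S)\le L(p+1_\emptyset)=L(p)$ would arise — but $T=S$ gives $h^{p+1_S}(S)\le|S|$, and one still needs this ruled out; here one uses that $L(p+1_S)<L(p)$ strictly, so Corollary~\ref{L::char}(1) with $T=S$ forces $h^{p+1_S}(S)-|S|<0$... actually $L(p)=L((p+1_S)-1_S)=L(p+1_S)+h^{p+1_S}(S)-|S|>L(p+1_S)$ gives $h^{p+1_S}(S)>|S|$ directly). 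That bookkeeping around the $T=S$ case is the one place to be careful.
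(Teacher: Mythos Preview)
Your argument is correct and follows essentially the same route as the paper: show $S\in OD(p)$ from $L(p+1_S)<L(p)$ via Corollary~\ref{L::char}, then assume some nonempty $T\subseteq S$ lies in $WUD(p+1_S)$, deduce $L(p+1_{S\setminus T})\le L(p+1_S)$, and contradict the strict minimality of $S$. You are in fact more careful than the paper in treating the edge case $T=S$ explicitly (where $S\setminus T=\emptyset$ and one uses $L(p+1_S)<L(p)=L(p+1_\emptyset)$); the thinking-out-loud in your dual sketch has a few slips in the inequality direction but lands on the correct hypothesis $l^{p-1_S}(T)\ge|T|$ for the contradiction.
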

\begin{proof}
We will show it for minimal minimizer, the proof for maximal minimizer is similar.
If $S$ is a minimal minimizer for $p\in \N^m$ then $S\in OD(p)$
by definition. If $S\notin ED(p)$ then $\exists T\subset S$ such that $T\in WUD(p+1_S)$ and $T\neq \emptyset$,
then $L(p+1_S-1_T)\leq L(p+1_S)$ but then since $T\subset S$, $L(p+1_{S\setminus T})\leq L(p+1_S)$,
hence $S\setminus T$ is a witness for $S$ not being $p$'s minimal minimizer\footnote{The other direction
does not hold, see proof in the appendix, hence the new definitions strictly generalize the formers.}.
\end{proof}
Note also that, after proving the correctness of excess or dearth demand auctions framework, new ascending or descending auctions (and old ones)
can be proved easily by showing that only sets in excess demand are increased
or only sets in dearth demand are decreased. In particular, Theorem~\ref{alg::exc} and Lemma~\ref{lemma::mm::is::ed}
provide a concise proof for the well known Gul and Stacchetti's auction~\cite{GulSt2000}.

\subsection{Ascending}
We want to show next that an \emph{excess demand auction framework}, i.e., an ascending auction
that only increases sets of excess demand and terminates when there is none, will find the minimum Walrasian price vector.
By Theorem~\ref{price::char} it is enough to show that for the resulting price of the
auction, there is no over-demanded set and no non-trivial weakly under-demanded set.

\begin{lemma}\label{no::new::wud}
If for $p\in \N^m$ there is no non-trivial set in weakly under-demand, $WUD(p) = \{\emptyset\}$
and $S\subseteq \Omega$ is in excess demand, $S\in ED(p)$, then there is no non-trivial set in weakly
under-demand after increasing the price vector for every element of $S$, i.e., $WUD(p + 1_S) = \{\emptyset\}$.
\end{lemma}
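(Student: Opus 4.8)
The goal is to show that increasing prices on an excess-demand set $S$ cannot create a new non-trivial weakly under-demanded set, given that none existed at $p$. The natural strategy is to argue by contradiction: suppose $T \neq \emptyset$ becomes weakly under-demanded at $p + 1_S$, i.e. $h^{p+1_S}(T) \leq |T|$. Since $S \in ED(p)$, by definition every non-empty subset of $S$ is \emph{not} in $WUD(p+1_S)$, so $T$ cannot be contained in $S$; hence $T \setminus S \neq \emptyset$. The plan is to split $T$ into its part inside $S$ and its part outside $S$, and control the redundant function $h^{p+1_S}$ on each piece using the behavior of the auction redundant function under a one-unit price increase.

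The key technical ingredient I would extract is a monotonicity/continuity estimate for $h^p$ analogous to the one already invoked for $l^p$ in the proof of Theorem~\ref{price::char} (the corollary of Gul--Stacchetti's Theorem~2, Theorem~\ref{thm::2::GS}): namely that raising prices on $S$ can only decrease $h_p$ on items whose price rose, while $h_p$ on items outside $S$ is essentially preserved. Concretely, for the items of $T$ lying outside $S$, a demand set at $p+1_S$ restricted to $T \setminus S$ should still be (close to) achievable as a demand set at $p$ on $T \setminus S$, giving $h^{p+1_S}(T \setminus S) \geq h^p(T \setminus S) - (\text{small correction})$. Combined with the hypothesis $WUD(p) = \{\emptyset\}$, i.e. $h^p(R) > |R|$ for every non-empty $R$, this forces $h^{p+1_S}(T \setminus S)$ to be at least $|T \setminus S|$, or close to it. Then one needs a complementary bound showing that the part of $T$ inside $S$ contributes its full cardinality to $h^{p+1_S}(T)$ as well — here the hypothesis that $T \cap S$ (being a non-empty subset of $S$, if non-empty) is not in $WUD(p+1_S)$ gives $h^{p+1_S}(T\cap S) \geq |T\cap S|$ — and finally a superadditivity-type inequality $h^{p+1_S}(T) \geq h^{p+1_S}(T\cap S) + h^{p+1_S}(T\setminus S) - (\text{correction})$, or a direct argument selecting a single demand set that does well on all of $T$ simultaneously (using the single-improvement / exchange properties of gross substitutes), to conclude $h^{p+1_S}(T) > |T|$, contradicting $T \in WUD(p+1_S)$.

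The main obstacle I anticipate is the bookkeeping of the ``correction terms'': the redundant function $h$ is a max over demand sets, so unlike $l$ it is not immediately sub- or super-additive, and a single demand set at $p+1_S$ need not simultaneously maximize intersection with both $T \cap S$ and $T \setminus S$. Handling this requires care — most likely one uses the gross-substitutes exchange property to patch together a demand set $D \in D(p+1_S)$ whose intersection with $T$ is large, at the cost of bounded loss, precisely balanced against the strict inequalities $h^p(R) > |R|$ for subsets $R$ of $T\setminus S$. A secondary subtlety is ensuring the argument degrades gracefully in the boundary case $T \cap S = \emptyset$ versus $T\cap S \neq \emptyset$; in the former case $T = T\setminus S$ and one needs $h^{p+1_S}(T) \geq h^p(T) > |T|$ directly, i.e. that a pure ``outside'' set cannot lose redundancy when only ``inside'' prices go up, which should follow from the gross-substitutes definition applied to a maximizing demand set at $p$. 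Once these estimates are pinned down, assembling them into the contradiction is routine.
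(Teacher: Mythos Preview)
Your setup is correct: argue by contradiction, observe $T \not\subseteq S$ from the definition of $ED(p)$, and split into the cases $T \cap S = \emptyset$ and $T \cap S \neq \emptyset$. The disjoint case also goes through as you say: by gross substitutes, for each bidder a demand set at $p$ maximizing intersection with $T$ survives on $T$ after raising prices only on $S$, so $h^{p+1_S}(T) \geq h^p(T) > |T|$, contradiction. (The paper handles even this case via Lyapunov submodularity rather than the GS definition directly, but your route here is equally valid and arguably more elementary.)

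The non-disjoint case, however, has a real gap. The inequality you say you need, $h^{p+1_S}(T) \geq h^{p+1_S}(T\cap S) + h^{p+1_S}(T\setminus S) - (\text{correction})$, goes the wrong way: since $h_{i,q}(A\cup B) = \max_D |D\cap(A\cup B)| \leq \max_D|D\cap A| + \max_D|D\cap B|$ for disjoint $A,B$, the redundant function is \emph{sub}additive at each bidder, and there is no usable lower bound of this shape. Your fallback of ``patching together'' a single demand set that does well on both pieces already fails for a single unit-demand bidder with one demanded item in $T\cap S$ and one in $T\setminus S$: there $h(T\cap S)=h(T\setminus S)=1$ but $h(T)=1$. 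So the strict inequalities $h^p(R)>|R|$ give you nothing to balance against along this line.

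The paper avoids direct $h$-estimates entirely by passing to the Lyapunov function via Corollary~\ref{L::char} and then invoking its submodularity (Lemma~\ref{L::SM}). Concretely, $T\in WUD(p+1_S)$ becomes $L(p+1_S-1_T)\leq L(p+1_S)$, and $T\cap S\notin WUD(p+1_S)$ becomes $L(p+1_{S\setminus T}) > L(p+1_S)$; together these give $L(p+1_{S\setminus T}-1_{T\setminus S}) < L(p+1_{S\setminus T})$. Since $S\setminus T$ and $T\setminus S$ are disjoint, submodularity applied to $p$ and $p+1_{S\setminus T}-1_{T\setminus S}$ yields $L(p-1_{T\setminus S})\leq L(p)$, i.e.\ $T\setminus S\in WUD(p)$, the desired contradiction. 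The missing idea in your plan is precisely this reduction: rather than trying to lower-bound $h^{p+1_S}(T)$, one should push the under-demand down to $T\setminus S$ at the original price $p$.
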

\begin{proof}
Recall that by Corollary~\ref{L::char}, $\forall T,\ L(p-1_T) = L(p) + h^p(T) - |T|$.
The assumption that $WUD(p) = \{\emptyset\}$ then implies $\forall T\neq \emptyset,\ L(p-1_T) > L(p)$.
Assume that $\exists T\in WUD(p + 1_S)$ and $T\neq \emptyset$. Since $S\in ED(p)$, $T\not\subseteq S$;
by Corollary~\ref{L::char}, $L(p + 1_S - 1_T) \leq L(p + 1_S)$.

Assume first that $S\cap T = \emptyset$, then $(p + 1_S - 1_T)\wedge p = p - 1_T$ and $(p + 1_S - 1_T)\vee p = p + 1_S$.
We apply the Lyapunov submodularity, Lemma~\ref{L::SM} and get that $L(p + 1_S) + L(p - 1_T) \leq L(p) + L(p + 1_S - 1_T)$.
Now by the assumption that $L(p + 1_S - 1_T) \leq L(p + 1_S)$ we conclude that $L(p-1_T) \leq L(p)$
which is a contradiction to the assumption that there are no non-trivial weakly under-demanded sets in $p$.

Now if $S\cap T \neq \emptyset$ 
by the definition of excess demand we know that $S\cap T \notin WUD(p+1_S)$, hence, $h^{p+1_S}(S\cap T) > |S\cap T|$
and by Corollary~\ref{L::char}, $L(p+1_S)  = L(p+1_S-1_{S\cap T}) - h^{p+1_S}(S\cap T) + |S\cap T| < L(p+1_{S\setminus T})$.
By the assumption that $T\in WUD(p+1_S)$ together with Corollary~\ref{L::char} we know that
$L(p+1_{S\setminus T}-1_{T\setminus S}) = L(p+1_S-1_T) \leq L(p+1_S)$ so we have that
$L(p+1_{S\setminus T}-1_{T\setminus S}) < L(p+1_{S\setminus T})$.
But $\{S\setminus T\} \cap \{T\setminus S\} = \emptyset$ so we can again conclude that
$L(p - 1_{T\setminus S}) \leq L(p)$ which is, since $T\setminus S \neq \emptyset$, a contradiction to the
assumption that $WUD(p) = \{\emptyset\}$.
\end{proof}

\begin{lemma}\label{od::implies::ed}
$OD(p) \neq \emptyset \rightarrow ED(p) \neq \emptyset$.
\end{lemma}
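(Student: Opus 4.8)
\textbf{Proof proposal for Lemma~\ref{od::implies::ed}.}

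The plan is to argue by contradiction: suppose $OD(p)\neq\emptyset$ but $ED(p)=\emptyset$, and derive an absurdity. The key observation is that excess demand fails for a set $S\in OD(p)$ only because there is a non-trivial $T\subseteq S$ with $T\in WUD(p+1_S)$. So the strategy is to take an over-demanded set that is \emph{minimal} with respect to inclusion (or, if that is not quite enough, minimal with respect to some refined measure), and show that the obstructing weakly-under-demanded subset $T$ can be ``subtracted off'' to produce a strictly smaller over-demanded set, contradicting minimality. Concretely, I would first pick $S\in OD(p)$ minimizing $|S|$ and assume $S\notin ED(p)$, so there is $\emptyset\neq T\subseteq S$ with $h^{p+1_S}(T)\le |T|$, i.e.\ by Corollary~\ref{L::char}, $L(p+1_S-1_T)\le L(p+1_S)$. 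Since $S$ is over-demanded, Corollary~\ref{L::char} gives $L(p+1_S)<L(p)$. If $T=S$ then $L(p)\le L(p+1_S-1_S)=L(p)$ combined with $L(p+1_S-1_T)\le L(p+1_S)<L(p)$ is already a contradiction, so $T\subsetneq S$ and $S\setminus T\neq\emptyset$.

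Next I would show $S\setminus T$ is over-demanded, which gives the desired contradiction with minimality of $|S|$. The cleanest route is again through the Lyapunov: from $T\subsetneq S$ we have $p+1_S-1_T=p+1_{S\setminus T}$, so $L(p+1_{S\setminus T})=L(p+1_S-1_T)\le L(p+1_S)<L(p)$, which by Corollary~\ref{L::char} says exactly $l^p(S\setminus T)>|S\setminus T|$, i.e.\ $S\setminus T\in OD(p)$ (note $S\setminus T\neq\emptyset$, and the empty set is trivially not over-demanded since $l^p(\emptyset)=0$). This contradicts the minimality of $|S|$ among over-demanded sets. Hence every minimal over-demanded set is in fact in $ED(p)$, so $ED(p)\neq\emptyset$. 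The dual statement $UD(p)\neq\emptyset\Rightarrow DD(p)\neq\emptyset$ follows by the same argument with the redundant function $h$, weak over-demand, and descending moves in place of $l$, weak under-demand, and ascending moves.

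I expect the main subtlety to be verifying that the step $p+1_S-1_T=p+1_{S\setminus T}$ when $T\subseteq S$, and the accompanying bookkeeping with Corollary~\ref{L::char}, interact correctly with the definitions of $OD$ and $WUD$ — in particular being careful that ``over-demanded'' is the strict inequality $l^p(\cdot)>|\cdot|$ while ``weakly under-demanded'' is $h^p(\cdot)\le|\cdot|$, so the chain of Lyapunov inequalities must be strict where needed and the final conclusion must land on the strict side. A secondary point to double-check is that choosing $S$ to minimize $|S|$ is genuinely sufficient: since subtracting $T$ yields a \emph{strictly} smaller over-demanded set, induction on $|S|$ (base case: a singleton over-demanded set, for which no non-trivial proper subset exists, so it must lie in $ED(p)$) closes the argument without needing any finer potential function.
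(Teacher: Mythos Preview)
Your proof is correct and follows essentially the same route as the paper: both argue that if $S\in OD(p)\setminus ED(p)$ then the obstructing $\emptyset\neq T\subseteq S$ with $T\in WUD(p+1_S)$ yields, via Corollary~\ref{L::char} and the identity $p+1_S-1_T=p+1_{S\setminus T}$, a strictly smaller over-demanded set $S\setminus T$, the paper phrasing this as an inductive descent and you as minimality of $|S|$. You are in fact slightly more careful than the paper in explicitly ruling out $T=S$ (which the paper asserts without justification).
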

\begin{proof}
For $S\in OD(p)$ either $S\in ED(p)$ or $\exists T \subseteq S$
such that $\emptyset \neq T\in WUD(p+1_S)$. But then by Corollary~\ref{L::char},
we have that $L(p+1_S-1_T) = L(p+1_S) + h^{p+1_S}(T) - |T| \leq L(p+1_S) = L(p) - l^p(S) + |S| < L(p)$
where the first inequality is the fact that $T\in WUD(p+1_S)$ and
the second comes from $S\in OD(p)$.
Hence, since $T\subseteq S$, we conclude that $S\setminus T\in OD(p)$; but since $T\neq \emptyset$
and $T\neq S$ we can proceed inductively with $S\setminus T$ instead of $S$.
\end{proof}

\begin{theorem}\label{alg::exc}
If all valuations are monotone gross substitute then any ascending auction that only updates prices of all items
that belong to an excess demand set and terminates when there is none, will result in the minimum Walrasian price vector.
\end{theorem}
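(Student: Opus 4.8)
The plan is to run the auction, maintain throughout the invariant that no non-trivial weakly under-demanded set is ever present, and then invoke the minimum-Walrasian characterization of Theorem~\ref{price::char} at the moment the auction halts. Write the sequence of price vectors produced as $0^m = p_0, p_1, p_2, \dots$, where at step $t$ the auction selects some $S_t \in ED(p_t)$ and moves to $p_{t+1} = p_t + 1_{S_t}$, halting precisely when $ED(p_t) = \emptyset$.

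First I would check termination. Since $S_t \in ED(p_t) \subseteq OD(p_t)$ we have $l^{p_t}(S_t) > |S_t|$, so the second part of Corollary~\ref{L::char} gives $L(p_{t+1}) = L(p_t) - l^{p_t}(S_t) + |S_t| \le L(p_t) - 1$, using that $l^{p_t}(S_t)$ and $|S_t|$ are integers. As $L$ takes integer values (valuations and prices are natural numbers) and is bounded below by $0$ (each $u_i(p) \ge u_{i,p}(\emptyset) = 0$ and each $p_j \ge 0$), the auction performs at most $L(0^m)=\sum_i v_i(\Omega)$ steps and halts at some $p$ with $ED(p) = \emptyset$.

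Next I would establish, by induction on $t$, that $WUD(p_t) = \{\emptyset\}$. The inductive step is exactly Lemma~\ref{no::new::wud}, applied with the chosen $S_t \in ED(p_t)$. For the base case, at $p_0 = 0^m$ monotonicity forces $\Omega \in D_i(0^m)$ for every bidder, so for any $S \ne \emptyset$ we get $h^{0^m}(S) = \sum_i \max_{D \in D_i(0^m)} |S \cap D| \ge \sum_i |S| = n|S| > |S|$ whenever $n \ge 2$; hence $WUD(0^m) = \{\emptyset\}$. (The single-bidder case is degenerate: there $l^p(\cdot)$ can never exceed the size of its argument, so $OD(0^m) = ED(0^m) = \emptyset$ and $0^m$ is already the minimum Walrasian.)

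Finally, at the terminal vector $p$ we have $ED(p) = \emptyset$, hence $OD(p) = \emptyset$ by the contrapositive of Lemma~\ref{od::implies::ed}, and by the invariant there is no non-trivial weakly under-demanded set; Theorem~\ref{price::char} then yields that $p$ is the minimum Walrasian price vector. The substantive content has been pushed into Lemmas~\ref{no::new::wud} and~\ref{od::implies::ed} and into Theorem~\ref{price::char}, so the point needing the most care here is the base case $WUD(0^m) = \{\emptyset\}$, which one argues directly rather than through the $p - 1_S$ form of the weak-under-demand criterion that is unavailable at the all-zero vector.
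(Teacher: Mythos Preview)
Your proof is correct and follows essentially the same route as the paper: maintain the invariant $WUD(p_t)=\{\emptyset\}$ via Lemma~\ref{no::new::wud}, use Lemma~\ref{od::implies::ed} at termination to get $OD(p)=\emptyset$, and conclude with Theorem~\ref{price::char}. You additionally supply an explicit termination argument (strict Lyapunov decrease bounded below) and spell out the base case $WUD(0^m)=\{\emptyset\}$, including the degenerate $n=1$ situation, all of which the paper leaves implicit.
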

\begin{proof}
Every excess ascending auction starts with a price vector $p^{\mbox{start}} = 0^m$ which guarantees
that $WUD(p^{\mbox{start}}) = \{\emptyset\}$; therefore, by Lemma~\ref{no::new::wud}, also it ends
in a price vector $p^{\mbox{end}}$, for which $WUD(p^{\mbox{end}}) = \{\emptyset\}$.
By Lemma~\ref{od::implies::ed}, if $\exists S\in OD(p^{\mbox{end}})$ then also
$\exists T\in ED(p^{\mbox{end}})$ so $p^{\mbox{end}}$ is not the result of the auction.
We conclude that $OD(p^{\mbox{end}}) = \emptyset$ and $WUD(p^{\mbox{end}}) = \{\emptyset\}$, hence,
by Theorem~\ref{price::char}, $p^{\mbox{end}}$ is the minimum Walrasian price vector.
\end{proof}

The last theorem shows the `sufficiency' of the excess demand auction framework. To show
the `necessity' we state a lemma saying that if we increase prices of a non-excess
demand set then for sure we will create a weakly under-demanded set.
Informally, note that this is `best possible' by the nature of lower bounds,
as one cannot show that other algorithms will fail on \emph{all} inputs.

\begin{lemma}\label{lb::exc}
$\emptyset \neq S\notin ED(p) \rightarrow WUD(p+1_S) \neq \{\emptyset\}$.
\end{lemma}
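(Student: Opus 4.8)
The plan is to unwind the definition of excess demand and split into the two ways in which $S$ can fail to belong to $ED(p)$. By definition, $\emptyset \neq S \notin ED(p)$ means that \emph{either} there is a non-empty $T \subseteq S$ with $T \in WUD(p+1_S)$, \emph{or} $S \notin OD(p)$, i.e.\ $l^p(S) \leq |S|$. The first case needs no work at all: such a $T$ is itself a non-trivial weakly under-demanded set for $p+1_S$, so $WUD(p+1_S) \neq \{\emptyset\}$ immediately.

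For the remaining case I would show that $S$ itself becomes weakly under-demanded once we raise the prices on $S$. Starting from $l^p(S) \leq |S|$, Corollary~\ref{L::char}(2) gives $L(p+1_S) = L(p) - l^p(S) + |S| \geq L(p)$. Now apply Corollary~\ref{L::char}(1) at the price vector $p+1_S$ with the decreasing set $S$; since $(p+1_S) - 1_S = p$, this reads $L(p) = L(p+1_S) + h^{p+1_S}(S) - |S|$, hence $h^{p+1_S}(S) = |S| - \bigl(L(p+1_S) - L(p)\bigr) \leq |S|$. Thus $S \in WUD(p+1_S)$, and as $S \neq \emptyset$ we conclude $WUD(p+1_S) \neq \{\emptyset\}$, which is what is claimed.

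The only bookkeeping point worth checking is the paper's standing convention that $p+1_S$ may legitimately be decreased on $S$: this holds because every coordinate of $p+1_S$ indexed by $S$ equals $p_j + 1 \geq 1$, so $p+1_S-1_S$ is well defined. There is no genuine obstacle here — the entire content is that ``$S$ is not over-demanded at $p$'' is, via Corollary~\ref{L::char}, precisely the statement that $S$ turns weakly under-demanded after the unit price increase on $S$; the mild care needed is just in stating which of the two disjuncts of $S \notin ED(p)$ one is in and reading off the correct instance of Corollary~\ref{L::char}.
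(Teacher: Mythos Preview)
Your proof is correct and follows essentially the same argument as the paper: the same case split on whether $S\in OD(p)$, with the first case immediate and the second handled by showing $L(p+1_S)\geq L(p)$ via Corollary~\ref{L::char}(2) and then deducing $h^{p+1_S}(S)\leq |S|$ via Corollary~\ref{L::char}(1). You simply spell out the last step in more detail than the paper does.
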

\begin{proof}
If $S\notin ED(p)$ then either $S\in OD(p) \wedge \exists T\in WUD(p + 1_S)$, $T\neq \emptyset$, hence we are done;
or $S\notin OD(p)$, but then, by Corollary~\ref{L::char}, since $L(p)\leq L(p+1_S)$, $S\in WUD(p + 1_S)$.
\end{proof}

\subsection{Descending}
\begin{lemma}
If for $p\in \N^m$ there is no $\emptyset \neq S\subseteq \Omega$ in weakly over-demand, $WOD(p) = \{\emptyset\}$
and $S\subseteq \Omega$ is in dearth demand, $S\in DD(p)$, then there is no non-trivial set in weakly
over-demand after decreasing the price vector for every element of $S$, i.e., $WOD(p - 1_S) = \{\emptyset\}$.
\end{lemma}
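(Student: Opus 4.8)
The plan is to run the exact dual of the proof of Lemma~\ref{no::new::wud}, interchanging ascending/descending, over-/under-demand, and the requirement/redundant roles of $l^p$ and $h^p$. First I would record the two instances of Corollary~\ref{L::char} that drive everything: for every $T\subseteq\Omega$, $L(p+1_T)=L(p)-l^p(T)+|T|$ and $L(p-1_T)=L(p)+h^p(T)-|T|$. In particular the hypothesis $WOD(p)=\{\emptyset\}$ says precisely that $l^p(T)<|T|$, equivalently $L(p+1_T)>L(p)$, for every non-empty $T$. Toward a contradiction, fix a non-empty $T\in WOD(p-1_S)$. Since $S\in DD(p)$, the definition of dearth demand forbids every non-empty subset of $S$ from lying in $WOD(p-1_S)$, so $T\not\subseteq S$; and since $T$ is weakly over-demanded at $p-1_S$, Corollary~\ref{L::char} gives $L\big((p-1_S)+1_T\big)=L(p-1_S)-l^{p-1_S}(T)+|T|\le L(p-1_S)$.

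Next I would split on whether $S$ and $T$ are disjoint. If $S\cap T=\emptyset$, then coordinatewise $(p-1_S+1_T)\wedge p=p-1_S$ and $(p-1_S+1_T)\vee p=p+1_T$, so Lyapunov submodularity (Lemma~\ref{L::SM}) yields $L(p-1_S)+L(p+1_T)\le L(p)+L(p-1_S+1_T)\le L(p)+L(p-1_S)$, hence $L(p+1_T)\le L(p)$, contradicting $WOD(p)=\{\emptyset\}$. If $S\cap T\neq\emptyset$, note that $S\cap T$ is a non-empty subset of $S$, so by dearth demand $S\cap T\notin WOD(p-1_S)$, i.e.\ $l^{p-1_S}(S\cap T)<|S\cap T|$; since $p-1_S+1_{S\cap T}=p-1_{S\setminus T}$, Corollary~\ref{L::char} gives $L(p-1_{S\setminus T})>L(p-1_S)$. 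On the other hand $p-1_S+1_T=p-1_{S\setminus T}+1_{T\setminus S}$, so the earlier bound reads $L\big(p-1_{S\setminus T}+1_{T\setminus S}\big)\le L(p-1_S)<L(p-1_{S\setminus T})$. Finally, $S\setminus T$ and $T\setminus S$ are disjoint, so $(p-1_{S\setminus T}+1_{T\setminus S})\wedge p=p-1_{S\setminus T}$ and $(p-1_{S\setminus T}+1_{T\setminus S})\vee p=p+1_{T\setminus S}$; submodularity again gives $L(p-1_{S\setminus T})+L(p+1_{T\setminus S})\le L(p)+L(p-1_{S\setminus T}+1_{T\setminus S})<L(p)+L(p-1_{S\setminus T})$, so $L(p+1_{T\setminus S})<L(p)$ with $T\setminus S\neq\emptyset$ (because $T\not\subseteq S$), again contradicting $WOD(p)=\{\emptyset\}$. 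This exhausts the cases, so $WOD(p-1_S)=\{\emptyset\}$.

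I do not expect an essential obstacle here: this lemma is the strict dual of Lemma~\ref{no::new::wud}, and the only real work is bookkeeping — getting the signs right in the two applications of Corollary~\ref{L::char} and correctly identifying which of the two vectors plays the role of the meet and which the join in each use of Lyapunov submodularity. The one spot worth verifying carefully is the disjoint-union identities $p-1_S+1_{S\cap T}=p-1_{S\setminus T}$ and $p-1_S+1_T=p-1_{S\setminus T}+1_{T\setminus S}$, which should be checked coordinate by coordinate before invoking the corollary; everything else is a direct mirror of the ascending argument.
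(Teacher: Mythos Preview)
Your proposal is correct and follows essentially the same route as the paper's own proof: contradiction via a putative non-empty $T\in WOD(p-1_S)$, the disjoint case handled directly by Lyapunov submodularity, and the overlapping case reduced to the disjoint one via the identities $p-1_S+1_{S\cap T}=p-1_{S\setminus T}$ and $p-1_S+1_T=p-1_{S\setminus T}+1_{T\setminus S}$ together with $S\cap T\notin WOD(p-1_S)$. If anything, your write-up is slightly tidier than the paper's, since you make explicit that $T\not\subseteq S$ and hence $T\setminus S\neq\emptyset$, which the paper uses but does not state.
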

\begin{proof}
Recall that by Corollary~\ref{L::char} $\forall T,\ L(p+1_T) = L(p) - l^p(T) + |T|$.
The assumption that $WOD(p) = \{\emptyset\}$ then implies $\forall T\neq \emptyset,\ L(p+1_T) > L(p)$.
Assume that $\exists T\in WOD(p - 1_S)$ then by Corollary~\ref{L::char}, $L(p - 1_S + 1_T) \leq L(p - 1_S)$.

Assume first that $S\cap T = \emptyset$, then $(p - 1_S + 1_T)\vee p = p + 1_T$ and $(p - 1_S + 1_T)\wedge p = p - 1_S$.
We apply the submodularity lemma and get that $L(p - 1_S) + L(p + 1_T) \leq L(p) + L(p - 1_S + 1_T)$.
Now by the assumption that $L(p - 1_S + 1_T) \leq L(p - 1_S)$ we conclude that $L(p+1_T) \leq L(p)$
which is a contradiction to the assumption that there are no weakly over-demanded sets in $p$.

Now if $S\cap T \neq \emptyset$ we will show that also $T\setminus S\in WOD(p-1_S)$ which we just contradicted.
Indeed, by the definition of dearth demand we know that $S\cap T \notin WOD(p-1_S)$, hence, $l^{p-1_S}(S\cap T) < |S\cap T|$
and by Corollary~\ref{L::char}, $L(p-1_S) < L(p-1_{S\setminus T})$.
By the assumption that $T\in WOD(p-1_S)$ together with Corollary~\ref{L::char} we know that
$L(p-1_{S\setminus T}+1_{T\setminus S}) = L(p-1_S+1_T) \leq L(p-1_S)$ so we have that
$L(p-1_{S\setminus T}+1_{T\setminus S}) \leq L(p-1_{S\setminus T})$. Now clearly
$(p - 1_{S\setminus T} + 1_{T\setminus S})\vee p = p + 1_{T\setminus S}$ and
$(p - 1_{S\setminus T} + 1_{T\setminus S})\wedge p = p - 1_{S\setminus T}$;
by Lyapunov submodularity then
$L(p - 1_{S\setminus T}) + L(p + 1_{T\setminus S}) \leq L(p) + L(p - 1_{S\setminus T} + 1_{T\setminus S})$.
By the fact that $L(p-1_{S\setminus T}+1_{T\setminus S}) \leq L(p-1_{S\setminus T})$
we conclude that $L(p + 1_{T\setminus S}) \leq L(p)$ which is a contradiction to the
assumption that $WOD(p) = \{\emptyset\}$.
\end{proof}

\begin{lemma}
$UD(p) \neq \emptyset \rightarrow DD(p) \neq \emptyset$.
\end{lemma}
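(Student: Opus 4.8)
The plan is to dualize the proof of Lemma~\ref{od::implies::ed} line by line, interchanging over-demand with under-demand, price increases with price decreases, and the requirement function $l^p$ with the redundant function $h^p$. Only Corollary~\ref{L::char} and the definitions of $UD$, $WOD$ and $DD$ are needed; the preceding descending-side lemmas and the submodularity of $L$ play no role here. I would start from an arbitrary $S\in UD(p)$; since $h^p(\emptyset)=0=|\emptyset|$ we have $\emptyset\notin UD(p)$, so $S\neq\emptyset$, and I would prove $DD(p)\neq\emptyset$ by induction on $|S|$.

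If $S\in DD(p)$ there is nothing to prove, so assume not. By the definition of dearth demand there is a nonempty $T\subseteq S$ with $T\in WOD(p-1_S)$. First I would rule out $T=S$: applying Corollary~\ref{L::char} twice gives $L(p)=L\big((p-1_S)+1_S\big)=L(p-1_S)-l^{p-1_S}(S)+|S|$ and $L(p-1_S)=L(p)+h^p(S)-|S|$, whence $l^{p-1_S}(S)=h^p(S)$, and $h^p(S)<|S|$ because $S\in UD(p)$; thus $l^{p-1_S}(S)<|S|$, so $S\notin WOD(p-1_S)$ and therefore $T\neq S$.

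Next I would combine two applications of Corollary~\ref{L::char} along the chain $p\to p-1_S\to(p-1_S)+1_T$. Since $T\in WOD(p-1_S)$ we have $l^{p-1_S}(T)\geq|T|$, hence $L\big((p-1_S)+1_T\big)=L(p-1_S)-l^{p-1_S}(T)+|T|\leq L(p-1_S)=L(p)+h^p(S)-|S|<L(p)$, the last step because $S\in UD(p)$. As $T\subseteq S$, we have $(p-1_S)+1_T=p-1_{S\setminus T}$, so $L(p-1_{S\setminus T})<L(p)$; Corollary~\ref{L::char} then yields $h^p(S\setminus T)<|S\setminus T|$, i.e.\ $S\setminus T\in UD(p)$. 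Since $T\neq\emptyset$ and $T\neq S$, the set $S\setminus T$ is a nonempty element of $UD(p)$ with $|S\setminus T|<|S|$, and the induction hypothesis applies. The base case is any element of $UD(p)$ of minimal size: it admits no such witness $T$ and is therefore in $DD(p)$.

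I do not anticipate a genuine obstacle, since the whole argument is parallel to the ascending case. The points that require care — and the one I would flag as the main (minor) subtlety — are the preliminary observation that $S$ itself cannot be the obstructing set, which is exactly what guarantees $T\neq S$ so that the recursion passes to a strictly smaller nonempty set; together with the bookkeeping identity $(p-1_S)+1_T=p-1_{S\setminus T}$ for $T\subseteq S$ and termination of the recursion, which is immediate from finiteness of $\Omega$.
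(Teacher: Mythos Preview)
Your proposal is correct and follows essentially the same route as the paper's proof: start from some $S\in UD(p)$, if $S\notin DD(p)$ extract a nonempty $T\subseteq S$ with $T\in WOD(p-1_S)$, use Corollary~\ref{L::char} twice to get $L(p-1_{S\setminus T})<L(p)$ and hence $S\setminus T\in UD(p)$, and recurse on the strictly smaller set. In fact you are more careful than the paper on one point: the paper simply asserts ``$T\neq S$'' without justification, whereas you supply the short argument (via $l^{p-1_S}(S)=h^p(S)<|S|$) that rules out $S\in WOD(p-1_S)$.
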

\begin{proof}
For $S\in UD(p)$ either $S\in DD(p)$ or $\exists T \subseteq S$
such that $\emptyset \neq T\in WOD(p-1_S)$. But then by Corollary~\ref{L::char}
we have that $L(p-1_S+1_T) = L(p-1_S) - l^{p-1_S}(T) + |T| \leq L(p-1_S) = L(p) + h^p(S) - |S| < L(p)$
where the first inequality is the fact that $T\in WOD(p-1_S)$ and
the second comes from $S\in UD(p)$.
Hence we conclude that $S\setminus T\in UD(p)$; but since $T\neq \emptyset$
and $T\neq S$ we can proceed inductively with $S\setminus T$ instead of $S$.
\end{proof}

\begin{theorem}\label{alg::dea}
If all valuations are monotone gross substitute then any descending auction that only updates prices of
sets in dearth demand and terminates only when there is no dearth demand set, will find
the maximum Walrasian price vector.
\end{theorem}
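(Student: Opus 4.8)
The plan is to mirror the proof of Theorem~\ref{alg::exc} line for line, replacing each ingredient by its descending dual: the roles of $OD$, $WUD$ and $ED$ are now played by $UD$, $WOD$ and $DD$; I would invoke the two lemmas immediately preceding this theorem in place of Lemma~\ref{no::new::wud} and Lemma~\ref{od::implies::ed}, and close with the third bullet of Theorem~\ref{price::char} rather than the second.

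First I would settle the base case. A descending auction starts from the price vector $p^{\mbox{start}}$ with $p^{\mbox{start}}_j = \max_i v_i(\Omega)$ for every $j\in\Omega$. At this price every bidder demands the empty set, since any non-empty bundle $S$ costs $|S|\cdot\max_i v_i(\Omega) \ge \max_i v_i(\Omega) \ge v_i(S)$ by monotonicity, so $u_{i,p^{\mbox{start}}}(S)\le 0 = u_{i,p^{\mbox{start}}}(\emptyset)$. Consequently $l_{i,p^{\mbox{start}}}(S)\le |S\cap\emptyset| = 0$ for every bidder $i$ and every $S$, hence $l^{p^{\mbox{start}}}(S) = 0 < |S|$ for every non-empty $S$; that is, $WOD(p^{\mbox{start}}) = \{\emptyset\}$.

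Next I would run the induction over the auction steps. Each step decreases the price vector along some $S\in DD(p)$, and a dearth demand set is in particular under-demanded, so by the first of the two preceding lemmas the invariant $WOD(\cdot) = \{\emptyset\}$ is preserved throughout the run. Therefore the terminal vector $p^{\mbox{end}}$ satisfies $WOD(p^{\mbox{end}}) = \{\emptyset\}$, and in particular it has no strictly over-demanded set, hence no non-trivial weakly over-demanded set. To rule out an under-demanded set at $p^{\mbox{end}}$ I would use the second preceding lemma: $UD(p^{\mbox{end}})\neq\emptyset$ would force $DD(p^{\mbox{end}})\neq\emptyset$, contradicting the fact that the auction terminated. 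With ``no under-demanded set'' and ``no non-trivial weakly over-demanded set'' established, the third bullet of Theorem~\ref{price::char} yields that $p^{\mbox{end}}$ is the maximum Walrasian price vector.

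I do not expect a genuine obstacle: the two dual lemmas above carry all the weight, and the argument is the exact mirror image of Theorem~\ref{alg::exc}. The only points I would make sure to spell out are the base case above and a one-line remark that the auction is well-defined and halts --- decreasing an under-demanded set $S$ strictly lowers the Lyapunov, because $L(p-1_S) = L(p) + h^p(S) - |S| < L(p)$ by Corollary~\ref{L::char}, and $L$ is bounded below (by the optimal social welfare value), so only finitely many decreases can occur before no dearth demand set remains and the reasoning above applies.
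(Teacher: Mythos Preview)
Your proposal is correct and matches the paper's own treatment: the paper actually states Theorem~\ref{alg::dea} without proof, relying on the reader to dualize the argument of Theorem~\ref{alg::exc} using the two immediately preceding lemmas, which is precisely what you do. Your explicit verification of the base case $WOD(p^{\mbox{start}})=\{\emptyset\}$ and the termination remark via the strict Lyapunov decrease are welcome additions that the paper leaves implicit.
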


\begin{lemma}\label{lb::dea}
$\emptyset \neq S\notin DD(p) \rightarrow WOD(p-1_S) \neq \{\emptyset\}$.
\end{lemma}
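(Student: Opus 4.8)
The plan is to mirror exactly the proof of Lemma~\ref{lb::exc}, using the descending analogues that have already been set up. Lemma~\ref{lb::dea} is the ``necessity'' counterpart for descending auctions: if we decrease prices on a set $S$ that is \emph{not} in dearth demand, then we are guaranteed to create a non-trivial weakly over-demanded set at $p-1_S$, which by Theorem~\ref{price::char} blocks the auction from reaching the maximum Walrasian vector.

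First I would unpack the definition of dearth demand. Since $\emptyset\neq S\notin DD(p)$, exactly one of two things fails: either $S\notin UD(p)$, or else $S\in UD(p)$ but there is some nonempty $T\subseteq S$ with $T\in WOD(p-1_S)$. In the second case we are immediately done, because such a $T$ is precisely a non-trivial member of $WOD(p-1_S)$, so $WOD(p-1_S)\neq\{\emptyset\}$. So the only real content is the first case, $S\notin UD(p)$, meaning $h^p(S)\geq |S|$. Here I would invoke Corollary~\ref{L::char}(1), which gives $L(p-1_S) = L(p) + h^p(S) - |S| \geq L(p)$. Then applying Corollary~\ref{L::char}(2) at the vector $p-1_S$ with the set $S$ itself, $L(p-1_S+1_S) = L(p) = L(p-1_S) - l^{p-1_S}(S) + |S|$, so $l^{p-1_S}(S) = L(p-1_S) - L(p) + |S| \geq |S|$, which by definition means $S\in WOD(p-1_S)$ — and $S\neq\emptyset$, so this witnesses $WOD(p-1_S)\neq\{\emptyset\}$.

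The argument is genuinely short and there is no real obstacle; the only thing to be careful about is the bookkeeping with the Lyapunov identities and the direction of the inequalities, since the descending case flips signs relative to the ascending case ($h$ versus $l$, $+1_S$ versus $-1_S$). In particular one must note that $p-1_S+1_S = p$ so that the two applications of Corollary~\ref{L::char} compose to a trivial identity, which is what pins down $l^{p-1_S}(S)$. One should also remark (as the paper does for Lemma~\ref{lb::exc}) that this is ``best possible'' only in the sense appropriate to lower bounds: it shows that any single wrong step destroys the weak-over-demand-free invariant, not that every alternative auction fails on every instance. With Lemma~\ref{lb::dea} in hand, combined with Theorem~\ref{alg::dea} and the maximum-Walrasian clause of Theorem~\ref{price::char}, one gets the full ``if and only if'' characterization of descending auctions that reach the maximum Walrasian price vector.
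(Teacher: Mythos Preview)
Your proof is correct and follows exactly the same two-case split as the paper's own proof: in the second case a witness $T\in WOD(p-1_S)$ is handed to you by the failure of the dearth-demand condition, and in the first case you use Corollary~\ref{L::char} to deduce $L(p)\leq L(p-1_S)$ and hence $l^{p-1_S}(S)\geq |S|$, so $S$ itself lies in $WOD(p-1_S)$. The paper states this last step more tersely, but the content is identical.
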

\begin{proof}
If $S\notin DD(p)$ then either $S\in UD(p) \wedge \exists T\in WOD(p - 1_S)$, $T\neq \emptyset$, hence we are done;
or $S\notin UD(p)$, but then, by Corollary~\ref{L::char}, since $L(p)\leq L(p-1_S)$, $S\in WOD(p - 1_S)$.
\end{proof}

\section{Conclusion}
We introduced two new definitions for monotone gross substitute valuations;
in Paes~Leme's recent survey one can find many more interesting definitions and elegant
characterizations~\cite{PaesLeme2014}.

As noted, our new price characterization generalizes Gul and Stacchetti's from \emph{envy free} price
if and only if there is no \emph{over demanded} set, to \emph{Walrasian} price vector
if and only if there is no \emph{over demanded} nor \emph{under demanded} set.
We showed that \emph{minimum} and \emph{maximum} Walrasian prices can also be characterized using
\emph{weakly} over and under demand definition.

We introduced new frameworks that all ascending and descending auctions must follow
in order to guarantee termination at a minimum or maximum Walrasian. We showed, clearly, that
former auctions follow that framework.

\section{Acknowledgment}
I would like to thank Eyal Ackerman, Monika Henzinger, Oded Lachish,
Veronika Loitzenbauer, and Guy Wolfovitz for
the help in improving the presentation of the results.

\bibliographystyle{plain}
\bibliography{biblio}

\appendix
\section{More known results} \label{app::more}
Blumrosen and Nisan showed that if a set of gross substitute valuations is not additive when \emph{aggregated}
then there exists another valuation such that together there is no ascending auction that finds the maximum social
welfare~\cite{BlumrosenNi2010}. Gul and Stacchetti showed that no ascending auction can find a \textbf{VCG} outcome
when all the valuations are gross substitute~\cite{GulSt2000}.

Gul and Stacchetti showed also that when valuations are monotone the class monotone gross substitute is
equivalent to the class of monotone \emph{single improvement} valuations which is defined by the following.
\begin{definition} [Single improvement~\cite{GulSt1999}]
A valuation is in the class \emph{single improvement} if for every $p\in \N^m$ and a set
of items $S\notin D(p)$, there exists $T\subseteq \Omega$, such that $u_p(S) < u_p(T)$, $|T\setminus S|\leq 1$ and
$|S\setminus T|\leq 1$.
\end{definition}

For non-gross substitute valuations,
the following theorem was proved by Reijnierse~et~al.~\cite{ReijnierseGePo2002}.
\begin{theorem} [non-gross substitute~\cite{ReijnierseGePo2002}]\label{non::gs}
For a non-gross substitute valuation of more than two items,
$\exists q\in \N^m$, $A\subseteq \Omega$, and three items $j_1,j_2,j_3\in \Omega\setminus A$,
such that either $D(q) = \{A,A\cup\{j_1,j_2\}\}$ or $D(q) = \{A\cup\{j_3\},A\cup\{j_1,j_2\}\}$.\footnote{For valuations
of two items it is known that gross substitute is equivalent to submodularity, hence, the first case holds for
non-gross substitute valuations of two items. Therefore, the restriction of the number of items is only for the sake
of representation.}
\end{theorem}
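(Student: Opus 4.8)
The plan is to prove the contrapositive: from a valuation that is not gross substitute I will build a price vector $q$ whose demand correspondence is forced into one of the two listed shapes. The engine is the equivalence, recorded in Appendix~\ref{app::more}, between monotone gross substitute and the single improvement property. First I would \emph{localize} the failure of Definition~\ref{def:gs}. That definition quantifies over arbitrary increases $q \geq p$, but any such $q$ is reached from $p$ by a sequence of unit increases of a single coordinate; applying the carry-up property of Definition~\ref{def:gs} along such a path and taking the first step that breaks, I obtain a price $\tilde p$, a demanded set $\tilde S \in D(\tilde p)$, and a single item $k \in \tilde S$ for which no $S' \in D(\tilde p + 1_{\{k\}})$ contains $\tilde S \setminus \{k\}$. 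Thus raising the price of $k$ forces the bidder to abandon some further item of $\tilde S$, which is exactly a local complementarity and, read through single improvement, says there is a price $p$ together with two demanded sets $A, B \in D(p)$ between which no single exchange (drop one item, add one item, or swap one for one) stays inside $D(p)$.

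Among all such exchange-violating witnesses $(p, A, B)$, ranging over every price vector, I would pick one minimizing the symmetric-difference size $|A \triangle B|$; note $|A \triangle B| \geq 2$, since a difference of at most one is itself a single exchange. The next step is to \emph{tune prices} so the demand correspondence shrinks to exactly two sets. I would raise the prices of every item outside $A \cup B$ until it is expelled from all demanded sets, and then perturb the prices of the items in $A \triangle B$ (using integrality of values, or a generic tie-break) so that, among the sets wedged between the meet $A \cap B$ and the join $A \cup B$, precisely two remain utility-maximal. The care here is to perform this perturbation while keeping both $A$ and $B$ demanded and the exchange obstruction intact.

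Minimality of $|A \triangle B|$ is what collapses the surviving pair into the advertised forms. If the witness is nested, $A \subset B$, then the difference must be a single complementary pair $B \setminus A = \{j_1, j_2\}$ with neither $A \cup \{j_1\}$ nor $A \cup \{j_2\}$ demanded, giving $D(q) = \{A, A \cup \{j_1, j_2\}\}$. If $A$ and $B$ are incomparable, the smallest non-exchangeable difference is one item against two over the common base $A \cap B$, namely $A \cap B \cup \{j_3\}$ against $A \cap B \cup \{j_1, j_2\}$, which is the second form after renaming $A \cap B$ as $A$. Every other shape is excluded by minimality: a larger symmetric difference contains an internal single swap that, by the single improvement structure partially available inside $D(p)$, produces a strictly smaller violating witness, contradicting the choice of $(p,A,B)$.

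The hard part is exactly this interaction between the price tuning of the second paragraph and the case analysis of the third: I must certify that the perturbation annihilates every \emph{intermediate} demanded set without breaking $A, B \in D(q)$ or the obstruction, and that no configuration beyond the two listed can be the minimal one. The single improvement property, which controls precisely which one-for-one swaps are available inside $D(p)$, is what rules out the intermediate sets and the larger differences. A last routine point is the cardinality hypothesis: the second form needs three distinct items $j_1, j_2, j_3$, hence $|\Omega| \geq 3$; for two items non-gross-substitute coincides with non-submodularity and yields the first form directly with $A = \emptyset$, matching the footnote.
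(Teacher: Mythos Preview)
The paper does not prove Theorem~\ref{non::gs}. It is listed in Appendix~\ref{app::more} under ``More known results'' and is attributed to Reijnierse, van Gellekom and Potters~\cite{ReijnierseGePo2002}; the paper merely quotes the statement and uses it as a black box inside the proof of Theorem~\ref{GS::def}. There is therefore no in-paper argument to compare your proposal against.

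As for the proposal itself, the overall strategy --- localize the failure of gross substitute to a single-coordinate price increase, extract a minimal obstruction, and then tune prices so that exactly two bundles survive in the demand correspondence --- is the right shape and is close in spirit to how Reijnierse~et~al.\ proceed. Two points deserve caution. First, the sentence ``read through single improvement, says there is a price $p$ together with two demanded sets $A,B\in D(p)$ between which no single exchange \ldots\ stays inside $D(p)$'' is not what the single improvement property (Definition in Appendix~\ref{app::more}) gives you directly: that property speaks about \emph{non}-demanded sets admitting a one-step improvement, not about connectivity of $D(p)$ under exchanges. You would need an additional step (essentially the $M^{\natural}$-exchange characterization of gross substitute) to reach the ``no single exchange between two demanded bundles'' formulation. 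Second, the minimality argument in the incomparable case is asserted rather than argued: you need to explain why a symmetric difference of size four or more always contains an intermediate demanded set, and this is exactly where the real combinatorics lives. The price-tuning step is also delicate over $\N^m$, since generic perturbations are not available; one has to scale or use the integrality of valuations carefully to kill the intermediate bundles without disturbing $A$ and $B$.
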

When all valuations are gross substitute it can be shown that the maximum utility function
is submodular with respect to the price vector~\cite{GulSt2000}. Hence, since the Lyapunov is a summation
of the maximum utility functions and linear in the price, the following \emph{Lyapunov submodularity}
can be proved. This result appears for instance in Ausubel~\cite{Ausubel2005};
for a direct and complete proof, see~\cite{Ben-ZwiLaNe2013}.
\begin{lemma} [Lyapunov submodularity~\cite{Ausubel2005}]\label{L::SM}
If all valuations are gross substitute then for any two $p,q\in \N^m$,
$L(p\wedge q) + L(p\vee q) \leq L(p) + L(q)$.
\end{lemma}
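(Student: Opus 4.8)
The plan is to reduce the claim to the submodularity of each bidder's indirect utility $u_k$ and then to verify that submodularity locally, using Theorem~\ref{GS::def}. Write $L(p) = \sum_k u_k(p) + \ell(p)$ with $\ell(p) = \sum_j p_j$. Since $\min\{p_j,q_j\} + \max\{p_j,q_j\} = p_j + q_j$ for each coordinate, the linear part $\ell$ is \emph{modular}, i.e. $\ell(p\wedge q) + \ell(p\vee q) = \ell(p) + \ell(q)$. Hence the submodularity defect $L(p)+L(q)-L(p\wedge q)-L(p\vee q)$ equals the sum over bidders of the corresponding defects of the $u_k$, and it suffices to prove that each $u_k$ is submodular on the lattice $\N^m$. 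I drop the bidder index and write $u$.

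Next I would reduce lattice submodularity on $\N^m$ to a single local inequality. By Topkis' characterization of submodularity on a product of chains, a function on $\N^m$ is submodular if and only if it is submodular in every pair of coordinates with the remaining coordinates held fixed; and submodularity in a pair of $\N$-chains is in turn equivalent to the unit-step condition $u(p + 1_{\{j_1,j_2\}}) + u(p) \le u(p + 1_{j_1}) + u(p + 1_{j_2})$ for all $p \in \N^m$ and all distinct items $j_1, j_2$. This is exactly the restriction of the defining inequality to the $(j_1,j_2)$-plane based at $p$, with $p$ and $p+1_{\{j_1,j_2\}}$ playing the roles of the meet and join. So it remains to establish this local inequality.

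For the local inequality I would invoke the second item of Theorem~\ref{GS::def}, which (under the paper's standing assumption that all valuations are monotone gross substitute) gives $u(p + 1_S) = u(p) - l_p(S)$ for every $S\subseteq\Omega$. Substituting $S = \{j_1\}$, $\{j_2\}$, and $\{j_1,j_2\}$ turns the local inequality into $l_p(\{j_1\}) + l_p(\{j_2\}) \le l_p(\{j_1,j_2\})$. This is immediate from $l_p(S) = \min_{D\in D(p)}|S\cap D|$ together with the modularity of $|S\cap D|$ in $S$: since $|\{j_1,j_2\}\cap D| = |\{j_1\}\cap D| + |\{j_2\}\cap D|$, we get $l_p(\{j_1,j_2\}) = \min_D\bigl(|\{j_1\}\cap D| + |\{j_2\}\cap D|\bigr) \ge \min_D|\{j_1\}\cap D| + \min_D|\{j_2\}\cap D| = l_p(\{j_1\})+l_p(\{j_2\})$, because the minimum of a sum is at least the sum of the minima. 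Summing the resulting per-bidder submodularity over all bidders and adding the modular term $\ell$ yields the lemma.

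I expect the only delicate point to be the reduction step: one must justify carefully that pairwise (two-coordinate) submodularity on the product of chains $\N^m$ implies full lattice submodularity for arbitrary $p,q$, and that two-coordinate submodularity collapses to the single unit increment. The gross-substitute content of the lemma is entirely concentrated in the equality $u(p+1_S) = u(p) - l_p(S)$ of Theorem~\ref{GS::def}; once that is in hand the local inequality is a one-line consequence of min-superadditivity and needs no further structural input. As a sanity check one can verify the local inequality directly by cases on whether $j_1$ and $j_2$ lie in \emph{every} demand set at $p$, which reproduces the same bound.
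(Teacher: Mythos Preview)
Your argument is correct and follows the same decomposition the paper sketches immediately before stating the lemma: the price term is modular, so submodularity of $L$ reduces to submodularity of each indirect utility $u_k$. The paper, however, does not prove that last fact but simply cites \cite{GulSt2000} for it (and points to \cite{Ausubel2005} and \cite{Ben-ZwiLaNe2013} for the assembled statement). Your route is more self-contained: you derive the submodularity of $u$ from the paper's own Theorem~\ref{GS::def} by using the Topkis reduction to collapse lattice submodularity on $\N^m$ to the pairwise unit-step inequality, which via $u(p+1_S)=u(p)-l_p(S)$ becomes $l_p(\{j_1\})+l_p(\{j_2\})\le l_p(\{j_1,j_2\})$ and follows at once from min-superadditivity. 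So the two approaches coincide at the top level; you trade an external citation for an internal one plus the Topkis reduction, keeping the whole argument inside the paper's framework. The step you flag as delicate (pairwise unit-step submodularity on $\N^m$ implies full lattice submodularity) is standard, and there is no circularity since the proof of Theorem~\ref{GS::def} nowhere invokes Lemma~\ref{L::SM}.
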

The notions of \emph{maximum Walrasian price vector}
and \emph{minimum Walrasian price vector} are well defined
and represent unique price vectors by a direct corollary to the following lemma.
\begin{lemma} [Walrasian lattice property~\cite{GulSt2000}]\label{lattice}
If $p,q\in \N^m$ are two Walrasian price vectors then so is $(p\wedge q)$
and $(p\vee q)$.
\end{lemma}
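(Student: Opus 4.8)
The plan is to reduce the lattice property to the submodularity of the Lyapunov function together with its characterization of Walrasian prices. The crucial bridge is Corollary~\ref{cor::L::min::is::Wal}, which identifies the Walrasian price vectors precisely with the minimizers of $L$ over $\N^m$. First I would fix the common minimum value $L^* = \min_{r\in \N^m} L(r)$, which is well defined since valuations are natural numbers and prices may be capped at $\max_i v_i(\Omega)$, so that $L$ attains its infimum. Because $p$ and $q$ are both Walrasian, Corollary~\ref{cor::L::min::is::Wal} gives $L(p) = L(q) = L^*$.

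Next I would invoke the Lyapunov submodularity of Lemma~\ref{L::SM} applied to $p$ and $q$, which yields
$$L(p\wedge q) + L(p\vee q) \leq L(p) + L(q) = 2L^*.$$
On the other hand, since $L^*$ is the global minimum of $L$, each of $L(p\wedge q)$ and $L(p\vee q)$ is at least $L^*$, so their sum is at least $2L^*$. Combining the two inequalities forces $L(p\wedge q) + L(p\vee q) = 2L^*$, and since neither summand can fall below $L^*$, we conclude $L(p\wedge q) = L(p\vee q) = L^*$. Thus both $p\wedge q$ and $p\vee q$ minimize the Lyapunov, and applying Corollary~\ref{cor::L::min::is::Wal} in the converse direction shows that both are Walrasian price vectors, as required.

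There is essentially no hard step here: the whole argument is a \emph{squeeze} that sandwiches $L(p\wedge q)$ and $L(p\vee q)$ between $L^*$ from below (by minimality of $L^*$) and $L^*$ from above (by submodularity combined with $L(p)=L(q)=L^*$). The only point deserving a word of care is the legitimacy of passing to a global minimum of $L$; granting the boundedness of valuations and the cap on prices, this is immediate, and the rest follows by the two displayed inequalities. This is the classical fact that the set of minimizers of a submodular function is closed under meet and join, specialized to the Lyapunov $L$.
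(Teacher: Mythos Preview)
The paper does not supply its own proof of this lemma: it is stated in the appendix as a known result from Gul and Stacchetti~\cite{GulSt2000} and invoked without argument. Your proof is correct and is exactly the natural one given the tools the paper has assembled, namely Corollary~\ref{cor::L::min::is::Wal} (Walrasian $\Leftrightarrow$ Lyapunov minimizer) and Lemma~\ref{L::SM} (Lyapunov submodularity); there is no circularity, since neither of those results relies on the lattice property.

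One small remark: the digression about $L$ attaining its infimum is unnecessary. You are already handed a Walrasian vector $p$, and Corollary~\ref{cor::L::min::is::Wal} then says $L(p)$ \emph{is} the minimum, so $L^*=L(p)$ is immediate without any separate compactness or boundedness argument.
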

The following result is used in the proof of Theorem~\ref{price::char}.
\begin{theorem} [Theorem 2 in~\cite{GulSt2000}]\label{thm::2::GS}
For monotone gross substitute valuation and two sets of items $S,T$ such that $S\cap T = \emptyset$
it holds that $l_p(S) \geq l_{p+1_T}(S)$.
\end{theorem}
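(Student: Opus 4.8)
The plan is to convert the requirement function into a difference of indirect utilities and then read the claim off from the submodularity of that utility. By Theorem~\ref{GS::def}(2), for a monotone gross substitute bidder one has $l_p(S) = u(p) - u(p+1_S)$ for every $p$ and $S$; since the statement carries only the price subscript, it suffices to argue for a single bidder, the auction version following by summation over $[n]$. Applying this identity at the price vector $p$ and again at $p+1_T$, and using $S\cap T=\emptyset$ so that $p+1_T+1_S = p+1_{S\cup T}$, the quantity $l_p(S) - l_{p+1_T}(S)$ becomes $u(p) - u(p+1_S) - u(p+1_T) + u(p+1_{S\cup T})$, a four-term expression in the utilities at $p$, $p+1_S$, $p+1_T$, and $p+1_{S\cup T}$.

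Next I would recognise this four-term expression as a lattice inequality. Because $S$ and $T$ are disjoint, a coordinatewise check gives $p = (p+1_S)\wedge(p+1_T)$ and $p+1_{S\cup T} = (p+1_S)\vee(p+1_T)$: on $S$ the vector $p+1_S$ is the larger one, on $T$ the vector $p+1_T$ is the larger one, and off $S\cup T$ they agree. Thus the four utilities are exactly the meet, the two given points, and the join of the pair $p+1_S,\,p+1_T$. I would then invoke the submodularity of the indirect utility of a gross substitute valuation in the price vector (the single-bidder instance of Lemma~\ref{L::SM}, obtained by applying the lemma to the one-bidder auction and cancelling the modular term $\sum_j p_j$). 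Submodularity applied to the pair $p+1_S,\,p+1_T$ relates $u$ of the meet plus $u$ of the join to $u(p+1_S) + u(p+1_T)$, and rearranging this relation yields precisely the comparison between $l_p(S)$ and $l_{p+1_T}(S)$ asserted by the theorem.

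The main obstacle is bookkeeping rather than depth: I must pin down the orientation of the meet and join correctly — this is where disjointness of $S$ and $T$ is essential, since without it the join would not equal $p+1_{S\cup T}$ — and then track the signs carefully through the rearrangement so that the submodularity inequality lands with the direction required for the requirement functions. A secondary point to justify cleanly is the passage from the all-bidders Lyapunov submodularity of Lemma~\ref{L::SM} to the submodularity of a single bidder's indirect utility; isolating one bidder and discarding the linear price term settles this, after which summation over the bidders recovers the statement for the auction requirement $l^p$ exactly in the form used inside the proof of Theorem~\ref{price::char}.
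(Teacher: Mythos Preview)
Your reduction to submodularity is the right idea, but the step you flag as mere ``bookkeeping'' actually fails: when you carry out the signs, you obtain the opposite inequality. With $a=p+1_S$, $b=p+1_T$ (so that $a\wedge b=p$ and $a\vee b=p+1_{S\cup T}$ by disjointness), submodularity of the indirect utility reads $u(p)+u(p+1_{S\cup T})\le u(p+1_S)+u(p+1_T)$, and substituting this into your identity
\[
l_p(S)-l_{p+1_T}(S)=u(p)-u(p+1_S)-u(p+1_T)+u(p+1_{S\cup T})
\]
gives $l_p(S)-l_{p+1_T}(S)\le 0$, i.e.\ $l_p(S)\le l_{p+1_T}(S)$, not $\ge$.

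This is not a defect of your argument but of the statement you were handed: the inequality in the appendix is recorded with the wrong orientation. A two-item unit-demand bidder with $v(j_1)=v(j_2)=10$, $S=\{j_1\}$, $T=\{j_2\}$, $p=(0,0)$ already has $l_p(S)=0$ (witnessed by $\{j_2\}\in D(p)$) and $l_{p+1_T}(S)=1$ (since $D((0,1))=\{\{j_1\}\}$), contradicting $l_p(S)\ge l_{p+1_T}(S)$. The paper supplies no proof of this theorem---it is only cited from~\cite{GulSt2000}---and its single use, inside the proof of Theorem~\ref{price::char}, is via an unspecified ``corollary'' in a situation where the increased-price set \emph{contains} $R_k$ rather than being disjoint from it, so the disjoint version as written would not apply verbatim there in any case. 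Your submodularity route cleanly establishes the correct orientation $l_p(S)\le l_{p+1_T}(S)$; you should flag the discrepancy instead of asserting that the rearrangement lands on the stated direction.
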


Ascending auctions for generalized price definitions were suggested, for instance
by~\cite{MishraPa2007} and~\cite{VriesScVo2007}. The main difference In the model is allowing
non anonymous nonlinear prices. That is, the price of each set for each bidder can
be arbitrary. Our notion of 'over-demand' is analog to de~Vries~et.~al's notion of
'undersupplied' and their notion of 'minimally undersupplied' is analog to a subset
of the 'excess-demand' we define. They propose an ascending auction, for these relaxed
prices, that computes VCG prices for submodular valuations. Mishra and Parks suggest
an ascending auction that uses Walrasian equilibrium in order to find a VCG price for
all valuations. Apart from the price relaxations, the notions of 'ascending' and of
'Walrasian equilibrium' are also relaxed compare to our definitions.

\section{Excess demand set system strictly generalizes (previous definition of) excess demand set}
We finish with a lemma which proves that the newly defined set system \emph{excess demand} strictly generalizes
the known set definition \emph{minimal minimizer} (also known as excess demand in~\cite{GulSt2000}).
\begin{lemma}
$\exists p,S,\ s.t.\ S\in ED(p)\wedge S$ is not a minimal minimizer for $p$.
\end{lemma}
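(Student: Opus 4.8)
The plan is to exhibit a concrete combinatorial auction — with unit-demand bidders, since these are monotone gross substitute — and a price vector $p$ together with a set $S$ that is in excess demand but fails to be a minimal minimizer. By Lemma~\ref{lemma::mm::is::ed} we know the containment goes one way; to separate the notions we need an example where $S\in ED(p)$, i.e. $S\in OD(p)$ and no nonempty $T\subseteq S$ lies in $WUD(p+1_S)$, yet $S$ is not a minimal minimizer. The minimal-minimizer condition demands that $S$ be \emph{inclusion-minimal} among the minimizers of $L(p+1_{\cdot})$; so the cleanest way to break it is to arrange that $L(p+1_S)$ is the minimum over all $T$ of $L(p+1_T)$, but that some proper subset $S'\subset S$ \emph{also} achieves this minimum, $L(p+1_{S'}) = L(p+1_S) < L(p)$. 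The key realization is that $S'\subset S$ with $L(p+1_{S'})=L(p+1_S)$ is entirely compatible with $S\in ED(p)$: using Corollary~\ref{L::char}, $L(p+1_{S'})=L(p+1_S)$ just says $|S\setminus S'| = l^{p+1_{S'}}(S\setminus S')$ in the appropriate sense, whereas $S\in ED(p)$ only constrains $WUD(p+1_S)$, which is about \emph{decreasing} from $p+1_S$.

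Concretely I would look for the smallest possible example. Take two items $\Omega=\{a,b\}$ and try $p=0$. With unit-demand bidders one can tune the values $v_i(a),v_i(b)$ so that at $p=0$ the set $S=\{a,b\}$ is over-demanded (enough bidders demand-contained in $S$, which at $p=0$ is automatic once there are at least two bidders), while also $L(1_{\{a\}}) = L(1_{\{a,b\}})$ — for instance by making item $b$ so lightly contested that raising its price does not change anyone's demand relevant to the Lyapunov, so that the marginal contribution of adding $b$ to the raised set is exactly its cardinality cost. Then $S=\{a,b\}$ is a global minimizer of $L(p+1_{\cdot})$ but so is the proper subset $\{a\}$, hence $S$ is not a minimal minimizer. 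One then checks directly that $S=\{a,b\}\in ED(p)$: it is in $OD(p)$, and for each nonempty $T\subseteq S$ one verifies $h^{p+1_S}(T) \ge |T|$, i.e. $T\notin WUD(p+1_S)$, which amounts to checking that from the price vector $1_{\{a,b\}}$ no nonempty subset can be profitably decreased — equivalently that $1_{\{a,b\}}$ is already ``low enough'' in the relevant coordinates, which the construction guarantees because the optimal raised set was $\{a,b\}$ itself.

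The main obstacle I expect is purely bookkeeping: getting the numbers to simultaneously satisfy (i) $S\in OD(p)$, (ii) $L(p+1_S) < L(p)$, (iii) $L(p+1_{S'}) = L(p+1_S)$ for a proper subset $S'$, and (iv) $T\notin WUD(p+1_S)$ for every nonempty $T\subseteq S$. Conditions (ii)–(iii) pull toward ``$b$ is indifferent to price changes'' while (i) and (iv) want $b$ to be genuinely demanded at the relevant prices; resolving this tension is the delicate point, and it is why one wants at least three bidders (two to force over-demand of $\{a,b\}$, plus the freedom to make one item's price increase welfare-neutral). Once a valid tuple $(v_i)$, $p$, $S$ is pinned down, the verification of each of the four conditions is a short finite computation using only Corollary~\ref{L::char} and the definitions of $OD$, $WUD$, $ED$, and minimal minimizer; I would present the example as an explicit table of $v_i(a), v_i(b)$ and then tabulate $D_i(p)$ and $D_i(p+1_S)$ to read off all the required quantities, concluding $S\in ED(p)$ while $S$ is not a minimal minimizer, which establishes the lemma and hence the strictness of the generalization.
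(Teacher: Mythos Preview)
Your ``key realization'' is false, and the approach via ties cannot work. Suppose $S'\subsetneq S$ with $L(p+1_{S'})=L(p+1_S)$. Since $p+1_{S'}=(p+1_S)-1_{S\setminus S'}$, Corollary~\ref{L::char}(1) applied at the price $p+1_S$ gives $L(p+1_{S'})=L(p+1_S)+h^{p+1_S}(S\setminus S')-|S\setminus S'|$, whence $h^{p+1_S}(S\setminus S')=|S\setminus S'|$ and so $S\setminus S'\in WUD(p+1_S)$. As $S\setminus S'$ is a nonempty subset of $S$, this immediately rules out $S\in ED(p)$. In other words the equality you propose can be read \emph{both} as ``go up from $p+1_{S'}$'' and as ``go down from $p+1_S$'', and the second reading is exactly the obstruction; this is essentially the computation in Lemma~\ref{lemma::mm::is::ed} run the other way. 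Consequently, the only way for $S\in ED(p)$ to fail minimal-minimizer is that $S$ is not a \emph{global} minimizer of $T\mapsto L(p+1_T)$ at all: some set $T\not\subseteq S$ satisfies $L(p+1_T)<L(p+1_S)$.

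The paper's example realises precisely this: take two items $a,b$ and three identical \emph{additive} bidders with $v_i(a)=v_i(b)=1$. At $p=(0,0)$ every bidder's unique demand is $\{a,b\}$, so $l^p(\{a\})=3>1$ and $\{a\}\in OD(p)$; at $p+1_{\{a\}}=(1,0)$ each bidder still has $\{a,b\}\in D_i$, giving $h^{(1,0)}(\{a\})=3>1$, hence $\{a\}\in ED(p)$. But $L(0,0)=6$, $L(1,0)=4$, $L(1,1)=2$, so the (unique) minimizer of $L(p+1_T)$ is $T=\{a,b\}$ with value $2<4$, and the singleton $\{a\}$ is not a minimizer, let alone the minimal one. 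Your two-item, three-bidder instinct was right, but you should take $S$ to be a \emph{singleton} dominated by a larger set, not the full set tied with a smaller one.
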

\begin{proof}
Assume there are three identical additive bidders and two items they all
evaluate for $1$. For $p = (0,0)$ each singleton is in $ED(p)$ but the minimal
minimizer is the set of both items.
\end{proof}

\end{document}